\def\y{{\bf y}}
\def\x{{\bf x}}
\def\h{{\bf h}}
\def\Sigmab{{\bf \Sigma}}
\def\mub{{\bf \mu}}
\def\E{\,\mathds{E}\,}
\def\R{\mathds{R}}
\def\lcp{\left\{}
\def\rcp{\right\}}
\newtheorem{theorem}{Theorem}[section]
\newtheorem{corollary}[theorem]{ Corollary}
\newtheorem{lemma}[theorem]{Lemma}
\newtheorem{proposition}[theorem]{Proposition}
\begin{document}

\title{Dynamic Optimization of Generalized \\ Least Squares Handover Algorithms}


\author{Carlo Fischione, George Athanasiou, Fortunato Santucci
\thanks{C. Fischione and G. Athanasiou are with the Automatic Control Lab, School of Electrical Engineering, KTH Royal Institue of Technology, Sweden. E-mail: \ttfamily{\{carlofi, georgioa\}@kth.se}}%
\thanks{F. Santucci is with the Centre of Excellence DEWS and DISIM, University of L'Aquila, L'Aquila, Italy.
                E-mail: \ttfamily{\{fortunato.santucci\}@univaq.it}}%
\thanks{The work of C.~Fischione and G. Athanasiou was supported by the Swedish Research Council and the EU projects Hycon2 and Hydrobionets. The work of F. Santucci was supported by EU project Hycon2 and a research contract with Thales Communications Italy.}
\thanks{A preliminary version of this work appeared in \cite{VTC07}. The authors thank C. Rinaldi and K. H. Johansson for discussions on background topics of this manuscript.}
}

\maketitle

\begin{abstract}
Efficient handover algorithms are essential for highly performing mobile wireless communications. 
These algorithms depend on numerous parameters, whose settings must be appropriately optimized to offer a seamless connectivity. Nevertheless, such an optimization is difficult in a time varying context, unless adaptive strategies are used. In this paper, a new approach for the 
handover optimization is proposed. First, a new modeling of the handover process by a hybrid system that takes as input the handover parameters is established. Then, this hybrid system is used to pose some dynamical optimization approaches where the probability of outage and the probability of handover are considered. Since it is shown that these
probabilities are difficult to compute, simple approximations of adequate accuracy are 
developed. Based on these approximations, a new approach to the solution of the handover optimizations is proposed by the use of a trellis diagram. A distributed optimization algorithm is then developed to maximize handover performance. From an extensive set of results obtained by numerical computations and simulations, it is shown that the proposed algorithm allows to improve performance of the handover considerably 
when compared to more traditional approaches.
\end{abstract}

\clearpage
\section{Introduction}

The handover process is the mechanism of transferring the connection between a
mobile station and a base station to another base station, so that
the communication can be maintained with adequate quality. The growth of cellular wireless systems with mobile communications, vehicular communications, and 
multi-protocols mobile terminals, has motivated the investigation of efficient handover algorithms that are able to offer a seamless connectivity, i.e., good quality of the communication precisely during the switching mechanism.

There is a long history of studies on handover since the beginning of the cellular era
\cite{sant99}\nocite{tripathi98,gud91,kumar94,veeravalli97,sant99,akar01,stojmenovic01,sant02,zhang02,akar03,AM_INFO2003,Fang04,kim07,diererich05,Lampropoulos05,Meriem08,Wang04,Choi+08,Sgora09,wang09,lee11,leu2004}~--~\cite{sung2003}. 
Handover algorithms can be classified in two types \cite{stojmenovic01}: soft and hard. 
In the hard handover, the connection to the serving
base station is released while the new base station takes on
the connection~\cite{zhang02}, \cite{leu2004}. In the soft handover, the mobile station can be
simultaneously connected to two or more different base stations~\cite{wang09}, \cite{sung2003}. This can
be achieved by exploiting the temporal diversity offered by
multi-path propagation (see, e.g., \cite{Choi+08}). Such a strategy gives
the smoothest connection and offers potentially higher performance
compared to the hard handover. Whereas the soft handover can be used generally between same wireless systems, hard handover can be used both between same system and between heterogeneous systems. However, soft handover is
highly expensive for the network operator. As a matter of fact,
recent standardization of LTE suggests the use of hard
handover only \cite{Sesia}. Moreover, hard handover seems the only option in inter-system handovers and, in general, in the emerging arena of connection management throughout heterogeneous wireless interfaces. Consequently, in this paper we restrict our
attention to such a class.

The papers \cite{sant99}\nocite{tripathi98,gud91,kumar94,veeravalli97,sant99,akar01,sant02,akar03,AM_INFO2003,kim07,lee11,diererich05,Lampropoulos05,Meriem08,Wang04,Sgora09}~--~\cite{Fang04} present handover approaches in the same system. They focus on the choice of the handover decision parameter, e.g., the received signal strength from the serving and
the neighboring base stations, the distance from base stations, and the bit error rate, within one wireless system. 
 In \cite{gud91}, the authors define three performance indicators: the probability of lost calls, the probability of unnecessary handover and the probability of handover, and they design a handover algorithm that trades-off among these indicators, since they cannot be minimized simultaneously. The way to adapt handover to the wireless environment, e.g.,
macro-cellular, micro-cellular, overlay systems, has been surveyed in 
\cite{tripathi98}. In \cite{kumar94}, the performance of a handover procedure using both bit error rate and relative signal strength measurements is analyzed. A call-quality criterion to balance against
the number of handovers in designing an optimal handover strategy is presented in \cite{veeravalli97}. Handover algorithms that are based on the least square (LS) estimate of path-loss parameters of the various radio links have been introduced in \cite{sant99}. In \cite{Wang04}, two schemes have been proposed for managing downlink CDMA radio resources that maintain ongoing call quality by minimizing call-dropping during handovers, without over-penalizing new arrivals. In both schemes, the guard capacity of a cell is dynamically adjusted so to maintain the handover dropping rate at or below a target level. 

The natural evolution of the papers surveyed above focuses on handovers between heterogeneous networks. In \cite{kim07}, the authors propose a vertical handover algorithm which is able to avoid the ping-pong effect which may occur when mobile terminals are moving between different networks. The proposed algorithm determines the appropriate time at which a terminal should initiate a handover. In addition, in \cite{Fang04} the authors propose new optimization techniques for 
handover decision with main target to maximize the benefit of the handover for both the user and the network. The optimizations incorporate a network elimination feature to reduce the delay and processing required in the evaluation of the cost function. A multi-network optimization is also introduced to improve throughput for mobile terminals with multiple active sessions. A group-handover approach is presented in \cite{lee11}, where an optimized network selection and adjusted delay in the initiation of handover to reduce the probability of handover blocking is considered. 

As the understanding of handover techniques and related performance has become more mature, more advanced mathematical tools have been proposed. The switching of the handover mechanism, both within the same wireless system or between different wireless systems, can be naturally modeled by a hybrid system framework, where a mixture
of continuous state and discrete event dynamics is taken into account. Hybrid systems provide a unified framework for describing processes
evolving according to continuous dynamics, discrete dynamics and
logic rules, e.g., ~\cite{mitra03}. Some of the most studied hybrid systems in the recent years
involve the interactions between
continuous and discrete dynamics, since this class of systems has
found applications in wireless
networks, embedded systems, and
control~\cite{hespanha01hybrid}.
In \cite{akar01}~--\nocite{akar03}~\cite{AM_INFO2003}, the idea to optimally control the switchings among base stations by hybrid systems is proposed. In \cite{akar03}, such an approach is extended by considering soft handover in a fading environment with interference. 

In this paper we propose a hybrid modeling and optimization of a general handover
algorithm. We consider the most important elements concurring in the handover mechanism, such as wireless channel estimation, performance selection, and handover optimization. We build upon the approach called Generalized Extended Least Squares handover (GELS) that was proposed in 
\cite{sant02}, where handover algorithms are designed based on the estimation of the path loss. In particular, we first propose a new mathematical model of GELS by a hybrid system where the different
time-varying dynamics of the wireless channel and handover decision are taken into account. Second, based on such a modeling, we propose a dynamic optimization approach to decide when performing the base station switching. Our
approach is related to relevant contributions in \cite{VTC07},
\cite{tripathi98}~--~\cite{AM_INFO2003}; however, our study differs significantly since we include in the
hybrid model the wireless channel estimation, various optimization problems, and solution algorithms, which was neglected before. We extend the previous work
in~\cite{VTC07, sant02} by tackling the challenging problem of the optimal decision rule for
handover. This requires an entirely new characterization of the handover performance indicators, such as probability of outage and probability of handover. In fact, we propose a new complete framework where the use of mathematical tools that include
models of the wireless propagation scenario and the system
characteristics are essential for offering highly
performant handover algorithms.

The rest of the paper is organized as follows. In Section~\ref{sec:model} the
basic system model for the handover is presented, whereas the
hybrid model of GELS handover scheme is described in Section~\ref{sec:gels} and expression of related performance metrics is investigated in Section~\ref{sec:indexes}. The handover 
optimization problems are presented in~\ref{sec:optprob}. Finally,
simulation results are presented in Section~\ref{sec:results}
and Section~\ref{sec:conclusions} concludes the paper.


\section{System Model}\label{sec:model}

We consider a general scenario for the handover given by a mobile terminal (MT) 
moving among a number of Base Stations (BSs). The MT measures the signal
strengths coming from all the BSs. Suppose that there are $S$ base
stations. Let the received signal strength in a log-scale from each BS be
\begin{align*}
p_s(n)=\alpha_s-\beta_s\textrm{log}[d_s(n)]+u_s[d_s(n)],  \quad s=0,1,\ldots,S-1
\quad n=1, \ldots N 
\end{align*}
where $d_s(n)$ is the distance at time instant $n$ between the MS and BS $s$, $\alpha_s-\beta_s\textrm{log}[d_s(n)]$ is the path loss,
whereas $u_s[d_s(n)]$ models the shadow fading. The terms
$u_i[d_i(n)]$ and $u_j[d_j(n)]$ for $i\neq j$ are assumed to be zero
mean Gaussian processes independent of each other. Such a model of the received signal strength is common in the handover literature, e.g., \cite{AM_INFO2003}. It results from an average over a number of samples, that are typically spaced enough in the time domain so that the fast fading is filtered out. 

For the handover algorithm, the estimate of the signal strength received from BS$_s$, from the
sequence $\{p_s(n)\}$, $i=1,2,\ldots,n$, is calculated as follows \cite{sant99}, \cite{sant02}:
\begin{align*}
l_s(n)=\sum_{i=n_b}^{n}p_s(i)G_s(n,i) \,,
\end{align*}
with $n_b=\max\{1,n-n_w+1\}$, where $n_w$ is the length of the
window  used for the estimation, and $G_s(n,i)$ are filtering
coefficients. Both the window length and the filtering coefficients depend on the handover algorithm. 

The active BS is defined as the BS to which a mobile terminal is connected. The best BS is defined as the BS with the best signal strength among those neighboring BSs that are candidates for the handover connection. Let the subscripts $0$ and $1$ denote two involved BSs, say the active BS and the best BS at a certain time instant, respectively,
and let $l_s(n)$ be the linear estimate of the signal strength
measured from BS$_i$, with $s=0$ or $1$. Consider the following random variable:
\begin{align} \label{eq:y}
y(n)=l_0(n)-l_1(n) \,.
\end{align}
The handover decision is based on the comparison of $y(n)$ with a
hysteresis margin $h(n)$. In particular, if MT is connected to a base station, say BS$_0$, it disconnects from it if $y(n) \leq -h(n)$, 
whereas if MT is connected to base station BS$_1$, it disconnects from it if $y(n) \geq h(n)$.
This mechanism is illustrated in Figure~\ref{fig:hysteresis}. 

\begin{figure}[t]
\centering
\includegraphics[width=0.5 \textwidth]{./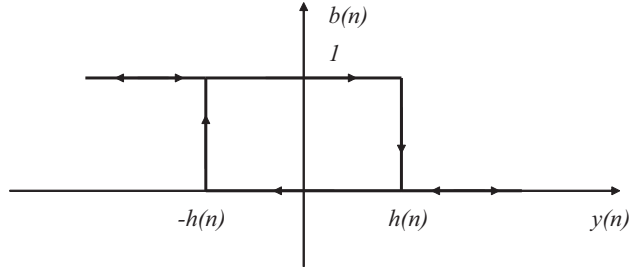}
\caption{The hysteresis of the handover decision. In this paper, the hysteresis margin $h(n)$ is selected for each time instant $n$ to optimize the handover.}
\label{fig:hysteresis}
\end{figure}

Based on this mechanism, during a certain time interval it is possible that multiple handovers happen, when the MT connects to a new BS at a time instant $n$ and then disconnects from a serving BS at the following time instant $n+1$. To ensure a high quality in the communication, it is desirable that the MT stays connected to the BS offering high quality in the communication, i.e., low outage probability. However, due to the mobility of the MT and the resulting variation of the wireless channel, especially in the outer cell region, it is uncertain which BS offers the best performance and therefore, disconnection and reconnection can happen. These disconnections and reconnections are quite expensive from an operator point of view, which has to transfer the MT state information from one BS to the other. It is therefore important to avoid as much as possible frequent reconnections. Such a trade off between MT and BS can be regulated by appropriately selecting the handover hysteresis. Ultimately, the handover performance is measured in terms of outage probability and handover probability, which depend on 1) the estimation process that gives the $l_0(n)$, $l_1(n)$ and $y(n)$; 2) the appropriate selection of the handover hysteresis margin $h(n)$.
In the following, we characterize these two essential aspects, and provide a method for the optimal handover decision.





\section{Hybrid Model of the Handover Algorithm}\label{sec:gels}

In this section, we present a new modeling of the estimation process that provides us with $l_s(n)$ and $y(n)$. The estimation is based on GELS estimation. The new modeling uses the hybrid system theory formalism, which will be included in the optimization process to select the handover hysteresis margin. To introduce such a model, we need some details on intermediate estimators 
that are part of the hybrid system: the AVG (Averaging), LS (Least Squares), and ELS (Extended Least Squares) estimators. We describe first the intermediate estimators in the sequel, according to \cite{sant99}, \cite{sant02}. 


The simplest way of computing the estimator coefficients
$G_s(n,i)$ is given by a simple average, which is the AVG
algorithm. The algorithms selects the coefficients as
$G_s(n,i)=g(n-i)$, where $g(n)$ is a given filter impulse
response. A simple choice is a rectangular shape for $g(n)$.
Accordingly, AVG is a simple filtering of the measured power
$p_s(\cdot)$. We turn our attention in the following to a more advanced estimator. 


LS makes the estimate $l_s(n)$ by assuming a model of the channel
attenuation and minimizing the squares of the difference between
the power of the received signal and the power of the assumed model, namely it is
assumed that the received power follows the model
$l_s(n)=\hat{\alpha}_s(n)-\hat{\beta}_s(n)\log d_s(i)$, where
$\hat{\alpha}_s(n)$ and $\hat{\beta}_s(n)$ have to be estimated by
a least square minimization over $n_w$ samples. Thus, the function that we would like to minimize is
\begin{align*}
\sum_{i=n_b}^{n}(p_s(i)-\hat{\alpha}_s(n)+\hat{\beta}_s(n)\log
d_s(i))^2 \,.
\end{align*}
The minimization yields
\begin{align*}
\hat{\alpha}_s(n)&=\frac{1}{D_s(n)-C_s^2(n)}\cdot[P_s(n)D_s(n)-Q_s(n)C_s(n)] \\
\hat{\beta}_s(n)&=\frac{1}{D_s(n)-C_s^2(n)}\cdot[P_s(n)C_s(n)-Q_s(n)] \,,
\end{align*}
and
\begin{align*}
P_s(n)& \triangleq \frac{1}{n-n_b+1}\sum_{i=n_b}^{n}p_s(i)\,, \qquad Q_s(n) \triangleq \frac{1}{n-n_b+1}\sum_{i=n_b}^{n}p_s(i)\log{d_s(i)}\\
C_s(n)& \triangleq \frac{1}{n-n_b+1}\sum_{i=n_b}^{n}\log{d_s(i)}\,, \qquad D_s(n) \triangleq \frac{1}{n-n_b+1}\sum_{i=n_b}^{n}(\log{d_s(i)})^2\,.
\end{align*}
From the minimization, it follows that
\begin{align}
l_s(n)=\sum_{i=n_b}^{n}p_s(i)A_s(n,i)-p_s(i)B_s(n,i)\log{d_s(n)}\,,
\end{align}
where
\begin{align*}
A_s(n,i)& \triangleq  \frac{1}{n-n_b+1}
\frac{D_s(n)-C_s(n)\log{d_s(i)}}{D_s(n)-C_s^2(n)} \qquad 
B_s(n,i) \triangleq
\frac{1}{n-n_b+1}\frac{C_s(n)-\log{d_s(i)}}{D_s(n)-C_s^2(n)}\,,
\end{align*}
whereby we see that the filter coefficients are
\begin{align}
G_s(n,i)&=A_s(n,i)-B_s(n,i)\log{d_s(n)} \,.
\end{align}
In the following the AVG and LS estimated are combined to provide the ELS estimator. 


The ELS algorithm schedules the use of both AVG and LS estimates,
depending on the reliability of the estimates of path loss
parameters provided by AVG and LS. ELS adapts to channel changing
conditions by comparing the errors on path loss estimates of AVG
and LS, which are indicated by $e_{s,1}$ and $e_{s,2}$, respectively, and
then choosing the estimate with the lower error~\cite{sant99,sant02}. 



GELS is a generalized version of ELS. It relies on
successive steps, starting from the 'easiest to handle' linear
handover algorithm, which is based on the averaging (AVG) of the signal
strength, up to operation in the most complex scenario with an adaptation mechanism for the
hysteresis. GELS uses the following variables:
\begin{align*}
&\Delta_s(n)=p_s(n)-l_s(n)\,, \quad e_{s,\min}(n)=\min\{e_{s,1}(n),e_{s,2}(n)\}\,, \quad &e_{s,r}(n)=\frac{\Delta_s(n)}{\sqrt{e_{s,\min}(n)}} \,.
\end{align*}
The last variable provides a measure of the validity of the estimator
model with respect to the current path loss model actually experimented on the channel. The GELS algorithm works as
follows~\cite{sant02}: the variables $e_{s,\min}(n)$, $l_s(n)$, $\Delta_s(n)$,
and $e_{s,r}(n)$ are computed for each BS$_s$-MT link, at time
instant $n$. If $\Delta_s(n)>>e_{s,r}(n)$ the estimator model is
not valid anymore and the estimates on each link are
re-initialized. The hysteresis margin $h(n)$ is intended to be adapted and an estimation in re-initialization 
of a link can be forced if $h(n)>h_M$, where
$h_M$ is the maximum value allowed for the hysteresis margin.


We are now ready to give the core contribution of this section. Consider the current base station, BS$_0$, to which MT is connected, and the BS with the strongest signal, BS$_1$. By the description above, we can model the GELS algorithm by a discrete
time hybrid system:
\begin{align} \label{Sn1}
S(n+1)=A(n)S(n)-f(d(n+1), d(n))+W(n)\,,
\end{align}
where:
\begin{align} \label{SN}
S(n)=[p_0(n), p_1(n), l_0(n), l_1(n)]^T\,,
\end{align}
\begin{align}
A(n)=\left[
\begin{array}{cccc}
1 & 0 & 0 & 0\\
0 & 1 & 0 & 0\\
G_0(n+1) & 0 & 1 & 0\\
0 & G_1(n+1) & 0 & 1
\end{array}
\right]\,,
\end{align}
\begin{align}
f(d(n+1), d(n))=\left[
\begin{array}{c}
\beta_0 \log\left(\frac{d_0(n+1)}{d_0(n)}\right)\\
\beta_1 \log\left(\frac{d_1(n+1)}{d_1(n)}\right)\\
G_0(n+1)\beta_0 \log\left[\frac{d_0(n+1)}{d_0(n)}\right]\\
G_1(n+1)\beta_1 \log\left[\frac{d_1(n+1)}{d_1(n)}\right]
\end{array}
\right]\,,
\end{align}
\begin{align}
W(n)=\left[
\begin{array}{c}
u_0(n+1)-u_0(n)\\
u_1(n+1)-u_1(n)\\
G_0(n+1)[u_0(n+1)-u_0(n)]\\
G_1(n+1)[u_1(n+1)-u_1(n)]
\end{array}
\right] \,,
\end{align}
and
\begin{align} \label{dn}
d(n)=\left[d_0(n) \: \: \: d_1(n)\right]^T\,.
\end{align}
Thus, the switching between the BSs after taking a handover decision 
can be described by the update equation for the discrete
variable $b(n) = \mathscr{E}(n)$, 
where $\mathscr{E}(n)$ is the event occurring when the MT is
connected to BS$_1$ at time $n$, namely
$\mathscr{E}(n)=\{b(n)=1\}$. Analogously, we let $\mathscr{\bar
E}(n)=\{b(n)=0\}$. In practice, $\mathscr{E}(n)$ is 1 when the MT is connected to the
BS$_1$, and $0$ when it is connected to the BS$_0$. The event $\mathscr{E}(n)$ can be expressed as
follows:
\begin{align} \label{epsilon} 
\mathscr{E}(n)  =  \{y(n) < - h(n)\} + \{y(n) < h(n),
\mathscr{E}(n-1)\}
\end{align}
and its complementary event is
\begin{align} \label{notepsilon}
\mathscr{\bar E}(n) =  \{y(n) \geq h(n)\} + \{y(n) \geq -h(n),
\mathscr{\bar E}(n-1)\} \,.
\end{align}

Eqs.~\eqref{Sn1} --~\eqref{dn} define a hybrid system with $h(n)$ as an input variable, which determines
the values of $\mathscr{E}(n)$. 
The next natural question is how to select such a variable so that the dynamic of the hybrid system, namely the handover process, is optimized with respect to some performance indicators. To answer this question, we need to characterize these indicators, as we will see in the following section.



\section{Expression of Performance Metrics}\label{sec:indexes}

We choose the probability of outage and the probability of handover as the performance metrics for the handover process. Specifically, we argument and derive the expressions of these probabilities within a quite general framework, that encompasses e.g. a hysteresis margin different from zero. These expressions will then be used in the next section to carry out optimization of the handover algorithm subject to the estimation of the channel. 
 With this goal in mind,
first we derive the probability of MT being connected to each of the base stations,
namely the probability of the event $\mathscr{E}(n)$.

\subsection{Probability of Connection to a Base Station}

The following events concur in the definition of Eq.~\eqref{epsilon} and Eq.~\eqref{notepsilon}:
\begin{align}
\mathscr{L}(h(n)) & = \{y(n) \leq -h(n)\}\,, \\
\mathscr{M}(h(n)) & = \{-h(n)<y(n)\leq h(n)\}\,, \\
\mathscr{N}(h(n)) & = \{y(n)>h(n)\} \,.
\end{align}
Then, we get the following results:
\begin{proposition} \label{prop:epsilon}
Consider the events corresponding to the MT being connected to the base station 1 and 0,
namely $\mathscr{E}(n)=\{b(n)=1\}$ and $\mathscr{\bar
E}(n)=\{b(n)=0\}$. Let $0\leq m \leq n$, where $n$ is the discrete time index. Then
\begin{align}
 \mathscr{E}(n) = & \sum_{j=m+1}^n \mathscr{L}(h(j)) \prod_{k=j+1}^n \mathscr{M}(h(k)) + \prod_{k=m+1}^n \mathscr{M}(h(k)) \mathscr{E}(m) \,, \label{eq:En}\\
\mathscr{\bar E}(n)  = & \sum_{j=m+1}^n \mathscr{N}(h(j)) \prod_{k=j+1}^n \mathscr{M}(h(k)) + \prod_{k=m+1}^n \mathscr{M}(h(k)) \mathscr{\bar E}(m)
\label{eq:notEn} \,.
\end{align}
\end{proposition}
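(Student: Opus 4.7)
The plan is to reduce the two-term recursion in \eqref{epsilon} to a compact one-step recursion, and then proceed by induction on $n-m$. I would first observe that the second summand in \eqref{epsilon}, namely $\{y(n) < h(n), \mathscr{E}(n-1)\}$, decomposes as the disjoint union $\mathscr{L}(h(n))\mathscr{E}(n-1) + \mathscr{M}(h(n))\mathscr{E}(n-1)$; the first piece is already contained in $\mathscr{L}(h(n))$ (the first summand of \eqref{epsilon}). Reading the $+$ as indicator arithmetic over disjoint events, one collapses \eqref{epsilon} to
$$
\mathscr{E}(n) = \mathscr{L}(h(n)) + \mathscr{M}(h(n)) \, \mathscr{E}(n-1) \, ,
$$
and the identical argument applied to \eqref{notepsilon} yields $\mathscr{\bar E}(n) = \mathscr{N}(h(n)) + \mathscr{M}(h(n)) \mathscr{\bar E}(n-1)$.

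Next, I would fix $m$ and induct on $n \geq m$. For the base case $n=m$, both the outer sum $\sum_{j=m+1}^m$ and the outer product $\prod_{k=m+1}^m$ are empty; under the usual conventions (empty sum equals the null event, empty product equals the sure event), the right-hand side of \eqref{eq:En} reduces to $\mathscr{E}(m)$, as required. For the inductive step, substitute the induction hypothesis for $\mathscr{E}(n-1)$ into the one-step recursion and distribute $\mathscr{M}(h(n))$ across the sum. Each inner product $\prod_{k=j+1}^{n-1} \mathscr{M}(h(k))$ picks up an extra factor $\mathscr{M}(h(n))$ and becomes $\prod_{k=j+1}^n \mathscr{M}(h(k))$; the leading term $\mathscr{L}(h(n))$ supplies the missing $j=n$ summand (the corresponding product $\prod_{k=n+1}^n \mathscr{M}(h(k))$ being empty, hence the sure event); the tail term gains the same extra factor, producing $\prod_{k=m+1}^n \mathscr{M}(h(k))\, \mathscr{E}(m)$. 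Collecting yields \eqref{eq:En} exactly, and the derivation of \eqref{eq:notEn} is word-for-word identical with $\mathscr{N}$ and $\mathscr{\bar E}$ in place of $\mathscr{L}$ and $\mathscr{E}$.

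The main obstacle is really only the preliminary notational step: making precise that the event-algebra $+$ and juxtaposition in \eqref{epsilon}--\eqref{notepsilon} correspond respectively to indicator addition over disjoint events and to intersection, and that the apparent overlap $\mathscr{L}(h(n)) \cap \mathscr{E}(n-1)$ between the two summands of \eqref{epsilon} is absorbed into $\mathscr{L}(h(n))$ by idempotence so that the one-step recursion above is valid without double counting. Once this is settled, the induction is routine bookkeeping with empty-sum and empty-product conventions, and the proposition follows.
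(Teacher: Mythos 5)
Your proof is correct and follows essentially the same route as the paper: both collapse \eqref{epsilon} to the one-step recursion $\mathscr{E}(n)=\mathscr{L}(h(n))+\mathscr{M}(h(n))\mathscr{E}(n-1)$ by absorbing the overlap $\mathscr{L}(h(n))\mathscr{E}(n-1)$ into $\mathscr{L}(h(n))$, and then unroll the recursion back to time $m$ (the paper by ``repeating the procedure,'' you by a formal induction with explicit empty-sum and empty-product conventions). If anything, your handling of the absorption step is cleaner than the paper's, which states the relevant inclusion as $\mathscr{\bar N}(h(n))\subseteq\mathscr{L}(h(n))$ when it should read $\mathscr{L}(h(n))\subseteq\mathscr{\bar N}(h(n))$.
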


\begin{IEEEproof}
Consider the definition of the
event $\mathscr{E}(n)$ given in Eq.~\eqref{epsilon}. Since
$\mathscr{\bar N}(h(n)) \subseteq \mathscr{L}(h(n))$, it follows
that
\begin{align*}
\mathscr{E}(n) = & \mathscr{L}(h(n)) + \mathscr{\bar N}(h(n))\mathscr{E}(n-1) = \mathscr{L}(h(n)) + \mathscr{M}(h(n))\mathscr{E}(n-1) \\
= & \mathscr{L}(h(n)) + \mathscr{M}(h(n))\mathscr{L}(n-1) + \mathscr{M}(h(n))\mathscr{M}(h(n-1))\mathscr{E}(n-2) \,,
\end{align*}
where the last inequality is obtained by using the definition of
$\mathscr{E}(n-1)$ given by Eq.~\eqref{epsilon} at time $n-1$. By
repeating the procedure until $0\leq m \leq n$ we obtain
sought expression~\eqref{eq:En}. The same argument can be applied to
obtain Eq.~\eqref{eq:notEn}, which concludes the proof.
\end{IEEEproof}
\begin{corollary} \label{coro:prob-epsilon}
Consider the events of MT being connected to the base station 1 and 0,
namely $\mathscr{E}(n)=\{b(n)=1\}$ and $\mathscr{\bar
E}(n)=\{b(n)=0\}$. Let $0\leq m \leq n$. Then
\begin{align}
\Pr[\mathscr{E}(n)]= & \sum_{j=1}^n \Pr \lcp \mathscr{L}(h(j)) \prod_{k=j+1}^n \mathscr{M}(h(k)) \rcp + \Pr\lcp \prod_{k=1}^n \mathscr{M}(h(k)) \rcp \mathscr{E}(0) \,, \label{eq:PrEn}\\
\Pr[\mathscr{\bar E}(n)]=& \sum_{j=1}^n \Pr\lcp \mathscr{N}(h(j)) \prod_{k=j+1}^n \mathscr{M}(h(k))\rcp + \Pr\lcp \prod_{k=1}^n \mathscr{M}(h(k)) \rcp \mathscr{\bar
E}(0) \label{eq:PrnotEn} \,.
\end{align}
\end{corollary}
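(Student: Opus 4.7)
The plan is to derive Corollary~\ref{coro:prob-epsilon} as a direct consequence of Proposition~\ref{prop:epsilon} specialized to $m=0$, by taking probabilities on both sides of the set-theoretic identities and invoking finite additivity. The single substantive ingredient beyond the proposition itself is pairwise disjointness of the events appearing in the decomposition.

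First I would set $m=0$ in Eqs.~\eqref{eq:En} and~\eqref{eq:notEn}. This expresses $\mathscr{E}(n)$ and $\mathscr{\bar E}(n)$ as a finite union (the ``$+$'' denoting union of mutually exclusive events, consistent with the proposition) of the events $\mathscr{L}(h(j))\prod_{k=j+1}^n \mathscr{M}(h(k))$ (respectively $\mathscr{N}(h(j))\prod_{k=j+1}^n \mathscr{M}(h(k))$) for $j=1,\ldots,n$, plus a tail term $\prod_{k=1}^n \mathscr{M}(h(k))\,\mathscr{E}(0)$ (respectively $\prod_{k=1}^n \mathscr{M}(h(k))\,\mathscr{\bar E}(0)$). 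The initial indicator $\mathscr{E}(0)$ (resp.\ $\mathscr{\bar E}(0)$) is the deterministic initial condition of the hybrid system of Section~\ref{sec:gels} and therefore acts as a $0/1$ scalar multiplier in the final term.

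The key step is to verify pairwise disjointness of the summands. For $1\le j_1<j_2\le n$, the summand indexed by $j_1$ imposes the constraint $\mathscr{M}(h(j_2))$ (because $j_1+1\le j_2\le n$), whereas the summand indexed by $j_2$ imposes $\mathscr{L}(h(j_2))$; since $\mathscr{L}(h(j_2))=\{y(j_2)\le -h(j_2)\}$ and $\mathscr{M}(h(j_2))=\{-h(j_2)<y(j_2)\le h(j_2)\}$ are disjoint by their defining inequalities on $y(j_2)$, the two summands are themselves disjoint. The tail term $\prod_{k=1}^n \mathscr{M}(h(k))$ is disjoint from every summand indexed by $j$ by the same argument applied at index $j$. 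Identical reasoning, with $\mathscr{L}$ replaced by $\mathscr{N}$, handles the decomposition of $\mathscr{\bar E}(n)$.

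Once disjointness is established, finite additivity of the probability measure transfers the set-theoretic decomposition term by term into a sum of probabilities, yielding Eqs.~\eqref{eq:PrEn} and~\eqref{eq:PrnotEn}; the scalar factor $\mathscr{E}(0)$ (resp.\ $\mathscr{\bar E}(0)$) comes outside the probability in the last term because it is deterministic. The only mild obstacle is the index bookkeeping needed to certify the disjointness; no independence assumption, conditioning, or evaluation of joint densities is required at this stage, as those enter only later when the individual probabilities $\Pr\{\cdot\}$ have to be computed from the hybrid-system dynamics.
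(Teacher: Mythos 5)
Your proposal is correct and follows the same route as the paper: specialize Proposition~\ref{prop:epsilon} to $m=0$ and use mutual exclusivity of the summands to pass to probabilities. Your explicit verification of pairwise disjointness (comparing the $\mathscr{L}(h(j_2))$ versus $\mathscr{M}(h(j_2))$ constraints at the common index) is a detail the paper leaves implicit, but it is the same argument.
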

\begin{IEEEproof}
The proof results from Proposition~\ref{prop:epsilon} by setting
$m=0$ and observing that $\mathscr{E}(n)$ and $\mathscr{\bar
E}(n)$ are given by the sum of mutually exclusive events, which concludes the proof. 
\end{IEEEproof}

We use the characterization of connection to the base stations given by
Proposition~\ref{prop:epsilon} and Corollary~\ref{coro:prob-epsilon} in subsections~\ref{sec:handover-prob} and~\ref{sec:outage-prob} to
characterize the probability of handover and the probability of outage. However, in order to get those expressions, it is instrumental to perform the explicit computation of the probabilities given by Proposition~\ref{prop:epsilon} and Corollary~\ref{coro:prob-epsilon}. 

The computation of the probabilities~\eqref{eq:PrEn}
and~\eqref{eq:PrnotEn} relies on a multivariate Gaussian
distribution, since the events $\mathscr{L}(k)$,
$\mathscr{M}(h(k))$, and $\mathscr{N}(k)$ are defined over
Gaussian cross correlated random variables. In particular, by using the expressions of filter coefficients derived in Section III, we see that the
random variable $y(n)$ defined in Eq.~\eqref{eq:y} has a Gaussian
distribution with average
\begin{align}
\mu_{y}(n)  =  & \E\{y(n)\}  = \sum_{i=n_b}^{n}
[\alpha_0-\beta_0\log d_0(i)]G_0(k,i) -\sum_{i=n_b}^{n}[\alpha_1-\beta_1\log d_1(i)]G_1(k,i) \,,
\end{align}
and covariance matrix $\Sigma=[\Sigma_{nh}]$, with $
\Sigma_{nh}=\rho_y(n,n-h)\sigma_y(n)^2\sigma_y(n-h) $, 
where
\begin{align} \sigma_y^2(n)  = & \E\left\{[y(n)-\mu_y(n)]^2\right\} \nonumber \\
= &
\sum_{i=n_b}^{n}\sum_{j=n_b}^{n}r_{u_0}(i-j)G_0(n,i)G_0(n,j) + \sum_{i=n_b}^{n}\sum_{j=n_b}^{n}r_{u_1}(i-j)G_1(n,i)G_1(n,j)\,, \label{sigmay}
\end{align}
and
\begin{align} \label{eq:corr}
r_{u_s}(l)\triangleq
\tilde{r}_{u_s}(lvT)=\sigma_{u_s}^2e^{-|l|vT/\bar{d}}
\end{align}
is the autocorrelation function of $u_s(n)$ (which accounts for
the memory of the shadowed wireless channel, and the more compact notation $u_s(n)$ is used instead of $u_s[d_s(n)]$), $n_b=\max\{1,n-n_w+1\}$, and
$$
\rho_y(n,n-h)\triangleq \frac{\E
\left\{[y(n)-\mu_y(n)][y(n-h)-\mu_y(n-h)]\right\}}{\sigma_y(n)\sigma_y(n-h)}
\,.
$$
Furthermore,
\begin{align*}
& \E\left\{[y(n)-\mu_y(n)][y(n-h)-\mu_y(n-h)]\right\} \nonumber \\
& \qquad = \sum_{i=n_b}^{n}\sum_{j=n'_b}^{n-m}\left[r_{u_0}(i-j)G_0(n,i)G_0(n-m,j) +r_{u_1}(i-j)G_1(n,i)G_1(n-m,j)\right]
\end{align*}
where $n'_b=\max\{1,n-m-n_w+1\}$.

However, the use of a multivariate Gaussian distribution may be computationally
prohibitive, especially when the sizes of the vector grows. Thus, we propose to an approximation of the probabilities~\eqref{eq:PrEn}
and~\eqref{eq:PrnotEn} using expressions of the adequate accuracy and the reduced computational complexity. 

In the following subsections, we propose some useful upper bounds and lower bounds for the probability of the events expressed over Gaussian random vectors.

\subsection{Approximation 1} \label{subsec:ulb1}

Consider the Gaussian random vector $\y \in \R^k$ $\y=[y(1), y(2),\ldots, y(k)]^T$. 
Let $\mu_{\y} =  [\mu_k] \in \R^k$ be the average of $\y$, and let
$\Sigma=[\Sigma_{hk}]\in R^{k \times k}$ the covariance matrix.
The computation of the probability of events defined over a correlated Gaussian vector
may be prohibitive when the size of the vector is large, because
of the large number of integrations. Since we are interested on 
the use of these probabilities for optimization purposes, it is
natural to resort to approximations that give upper and lower bounds.

We propose the use of the following simple approximation, where random
variables are grouped into subsets. For each subset, the precise
probability is computed. The approximation is based on the
definition of the size of the subsets and on taking the product of
the probabilities (thus leading to independent events when different subsets are concerned) :
\begin{align}
& \Pr \lcp \mathscr{Y}_1 \mathscr{Y}_2 \ldots \mathscr{Y}_k \rcp  \approx \Pr \lcp \mathscr{Y}_{k} \mathscr{Y}_{k-1}
\mathscr{Y}_{k-3} \mathscr{Y}_{k-4}\rcp \times \cdots \times \Pr
\lcp \mathscr{Y}_{4}\mathscr{Y}_{3}\mathscr{Y}_{2}\mathscr{Y}_{1}
\rcp \,.
\end{align}

The validity of this approximation is discussed in Section\ref{sec:results}.

\subsection{Approximation 2} \label{subsec:ulb2}

Here we propose general bounds. They are computationally
simple but, given their generality, may be not accurate in all circumstances.
The following lemma gives a useful intermediate result. 
\begin{lemma}~\label{lemma:gulb}
Consider the Gaussian vector $\y \in \R^k$ having average $\mub$
and covariance matrix $\Sigmab$. Let $\lambda_{\rm max}$ and
$\lambda_{\rm min}$ be the maximum and minimum eigenvalue of
$\Sigmab$, respectively. Consider the sets $\mathscr{Y}_l= \lcp
y_l \in [\underline{y}_l , \bar{y}_l] \rcp$ for $l=1\ldots k$. Then
\begin{align} \label{eq:gub}
& \Pr \lcp  \mathscr{Y}_1 \mathscr{Y}_2 \ldots \mathscr{Y}_k \rcp  \leq  \frac{\lambda_{\max}^{\frac{k}{2}}}{\sqrt{\det \Sigmab}} \prod_{l=1}^{k} \Pr \lcp y_l \in \left[\frac{\sigma_l\underline{y}_l}
{\sqrt{\lambda_{\max}}}, \frac{\sigma_l\bar{y}_l}{\sqrt{\lambda_{\max}}}\right] \rcp \,, \\
& \Pr \lcp  \mathscr{Y}_1 \mathscr{Y}_2 \ldots \mathscr{Y}_k \rcp  \geq  \frac{\lambda_{\min}^{\frac{k}{2}}}{\sqrt{\det \Sigmab}} \prod_{l=1}^{k} \Pr \lcp  y_l \in \left[\frac{\sigma_l\underline{y}_l}
{\sqrt{\lambda_{\min}}}, \frac{\sigma_l\bar{y}_l}{\sqrt{\lambda_{\min}}}\right] \rcp \label{eq:glb}\,.
\end{align}
\end{lemma}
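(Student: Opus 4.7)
The plan is to sandwich the joint Gaussian density of $\y$ pointwise between two isotropic Gaussian densities using the extremal eigenvalues of $\Sigmab$, and then integrate factor-by-factor over the product region $\prod_{l=1}^k [\underline{y}_l,\bar{y}_l]$. All the analytic content sits in the eigenvalue sandwich inequality for the quadratic form $(\y-\mub)^T\Sigmab^{-1}(\y-\mub)$; everything after that is a density comparison together with a univariate rescaling of each marginal integral.

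First, I would use the spectral decomposition $\Sigmab=U\Lambda U^T$, so that $\Sigmab^{-1}=U\Lambda^{-1}U^T$, to obtain the Loewner-order bracket $\lambda_{\max}^{-1}I\preceq\Sigmab^{-1}\preceq\lambda_{\min}^{-1}I$. Evaluating the quadratic form against this bracket gives $\lambda_{\max}^{-1}\|\y-\mub\|^2 \leq (\y-\mub)^T\Sigmab^{-1}(\y-\mub) \leq \lambda_{\min}^{-1}\|\y-\mub\|^2$. Substituting into the exponent of the Gaussian density $f_{\y}$ and rearranging the normalization constant yields the pointwise density bounds $f_{\y}(\y) \leq \bigl(\lambda_{\max}^{k/2}/\sqrt{\det\Sigmab}\bigr)\,g_{\lambda_{\max}}(\y)$ and $f_{\y}(\y) \geq \bigl(\lambda_{\min}^{k/2}/\sqrt{\det\Sigmab}\bigr)\,g_{\lambda_{\min}}(\y)$, where $g_\lambda$ denotes the density of a $\mathcal{N}(\mub,\lambda I_k)$ vector.

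Next I would integrate these pointwise bounds over the box $\prod_l[\underline{y}_l,\bar{y}_l]$. Because $g_\lambda$ is a product of $k$ independent $\mathcal{N}(\mu_l,\lambda)$ densities, the integral factorizes into $\prod_l \Pr\{\tilde y_l\in[\underline{y}_l,\bar{y}_l]\}$ with $\tilde y_l\sim\mathcal{N}(\mu_l,\lambda)$. To bring each such factor into the form appearing in~\eqref{eq:gub} and~\eqref{eq:glb}, I would apply the linear change of variable that maps the law of $\tilde y_l$ into that of the original marginal $y_l$ of variance $\sigma_l^2$: since the two centered Gaussians differ only by the scaling factor $\sigma_l/\sqrt{\lambda}$, the event $\tilde y_l\in[\underline{y}_l,\bar{y}_l]$ is equivalent to $y_l\in[\sigma_l\underline{y}_l/\sqrt{\lambda},\sigma_l\bar{y}_l/\sqrt{\lambda}]$. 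Taking $\lambda=\lambda_{\max}$ produces the upper bound~\eqref{eq:gub} and taking $\lambda=\lambda_{\min}$ produces the lower bound~\eqref{eq:glb}.

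The step I expect to require the most care is the last rescaling: the statement of the lemma suppresses any mean-shift in the rescaled endpoints, so the cleanest derivation either works under a zero-mean convention for $\y$ or interprets $\underline{y}_l,\bar{y}_l$ as already centered around $\mu_l$. Once that convention is fixed, the argument is essentially a one-line density comparison followed by a product of Gaussian tail probabilities, and no further technicalities intervene.
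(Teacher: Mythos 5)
Your proposal follows essentially the same route as the paper's own proof: both rest on the eigenvalue sandwich $\|\x\|^2/\lambda_{\max} \leq \x^T\Sigmab^{-1}\x \leq \|\x\|^2/\lambda_{\min}$ applied to the exponent of the joint density, followed by integration over the box and factorization of the resulting isotropic Gaussian. Your closing remark about the suppressed mean-shift in the rescaled endpoints is a fair observation about a detail the paper dismisses as ``simple algebra,'' but it does not change the argument.
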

\begin{proof}
For every $\x \in \R^k$ it holds that 
$
\|\x\|^2 / \lambda_{\max} \leq \x^T \mathbf{\Sigma}^{-1} \x \leq \frac{\|\x\|^2}{\lambda_{\min}} \,.
$
Therefore 
\begin{align*}
& \Pr \lcp \mathscr{Y}_1  \mathscr{Y}_2 \ldots \mathscr{Y}_k \rcp  \nonumber \\
& = \int_{\underline{y}_1}^{\bar{y}_1} \int_{\underline{y}_2}^{\bar{y}_2} \ldots \int_{\underline{y}_k}^{\bar{y}_k} \frac{e^{-\frac{1}{2}
( \mathbf{y}-\mathbf{\mu})^T\mathbf{\Sigma}^{-1}(\mathbf{y}-\mathbf{\mu})}}{\sqrt{\det \mathbf{\Sigma}(2 \pi)^2}}dy_1 \ldots d y_k \leq \int_{\underline{y}_1}^{\bar{y}_1} \int_{\underline{y}_2}^{\bar{y}_2} \ldots \int_{\underline{y}_k}^{\bar{y}_k} \frac{e^{-\frac{1}{2}
\frac{\|\mathbf{y}-\mathbf{\mu}\|^2}{\lambda_{\max}}}}{\sqrt{\det \mathbf{\Sigma}(2 \pi)^2}}dy_1 \ldots d y_k\,,
\end{align*}
whereby~\eqref{eq:gub} follows after simple algebra. The
derivation of~\eqref{eq:glb} is given by a similar argument.
\end{proof}

From the previous lemma we observe that, if the matrix $\mathbf{\Sigma}$ is
well conditioned, then the upper bound~\eqref{eq:gub} and lower
bound~\eqref{eq:glb} are consistent, since the ratio
$\lambda_{\max}/\lambda_{\min}$ will be small. Actually, the
covariance matrix is expected to be well conditioned because of the limited
memory of the wireless channel. More precisely, let $m$ be such a
memory in terms of number of discrete time instants. Then the
elements of the covariance matrix that are more than $m$ locations
before and after the diagonal have negligible values. By
applying the Gersgoring theorem \cite{HornJohnson85},
we see that
\begin{align*}
\min_i \left( \Sigma_{ii}-\sum_{j=l_i, j \neq i}^{u_i}\Sigma_{ij} \right)  \leq \lambda_{\min} \leq \lambda_{\max} \leq \max_i \left( \Sigma_{ii}+\sum_{j=l_i, j \neq i}^{u_i}\Sigma_{ij} \right) \,,
\end{align*}
where $l_i=\max(0,i-m)$ and $u_i=\min(k,i+m)$. Given the
correlation pattern~\eqref{eq:corr}, the sum of the off diagonal
elements of $\mathbf{\Sigma}$ is expected to be small with respect to
$\Sigma_{ii}$, which implies small conditioning numbers.

\subsection{Approximation 3} \label{subsec:ulb3}

Here we develop a bound that is more computationally
demanding, but exhibits better accuracy.
We use the following intermediate result:
\begin{lemma} \label{fact:approxjointprob}
Let $\mathscr{A}$ and $\mathscr{B}$ be two correlated events. Then
\begin{equation}
\Pr \lcp \mathscr{A} \mathscr{B} \rcp \leq \sqrt{\Pr \lcp
\mathscr{A} \rcp } \cdot \sqrt{\Pr \lcp \mathscr{B} \rcp } \,.
\end{equation}
\end{lemma}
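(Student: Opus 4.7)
The plan is to observe that the claim does not actually use any information about the correlation between $\mathscr{A}$ and $\mathscr{B}$, and instead follows from two elementary facts alone: monotonicity of probability measures and the nonnegativity of probabilities. First I would use the set inclusions $\mathscr{A}\mathscr{B} \subseteq \mathscr{A}$ and $\mathscr{A}\mathscr{B} \subseteq \mathscr{B}$, which, by monotonicity of $\Pr$, yield the pair of inequalities
\begin{equation*}
\Pr\lcp \mathscr{A}\mathscr{B}\rcp \leq \Pr\lcp \mathscr{A}\rcp, \qquad \Pr\lcp \mathscr{A}\mathscr{B}\rcp \leq \Pr\lcp \mathscr{B}\rcp.
\end{equation*}
Since both sides in each line are nonnegative, I would then multiply these inequalities term by term to obtain $\Pr\lcp \mathscr{A}\mathscr{B}\rcp^{2} \leq \Pr\lcp \mathscr{A}\rcp \Pr\lcp \mathscr{B}\rcp$, and conclude by taking square roots (which preserves the inequality on $[0,1]$).

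As an alternative presentation, which I would mention as a remark since it explains why the correlation structure is irrelevant, the bound is exactly the Cauchy--Schwarz inequality applied to indicator functions. Writing $\Pr\lcp \mathscr{A}\mathscr{B}\rcp = \E\bigl[\mathbf{1}_{\mathscr{A}}\mathbf{1}_{\mathscr{B}}\bigr]$ and invoking Cauchy--Schwarz gives $\E\bigl[\mathbf{1}_{\mathscr{A}}\mathbf{1}_{\mathscr{B}}\bigr] \leq \sqrt{\E[\mathbf{1}_{\mathscr{A}}^{2}]\,\E[\mathbf{1}_{\mathscr{B}}^{2}]}$, after which the idempotence $\mathbf{1}^{2} = \mathbf{1}$ for indicators collapses the right-hand side to $\sqrt{\Pr\lcp \mathscr{A}\rcp\Pr\lcp \mathscr{B}\rcp}$, exactly the stated bound.

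There is essentially no hard step: the only subtlety is noticing that the bound holds independently of whether $\mathscr{A}$ and $\mathscr{B}$ are positively correlated, negatively correlated, or independent, and that it is tight precisely when $\mathscr{A} = \mathscr{B}$, in which case both sides equal $\Pr\lcp \mathscr{A}\rcp$. The monotonicity-plus-multiplication route is shorter and avoids introducing indicator-function machinery, so that is the one I would put in the paper; the Cauchy--Schwarz interpretation would go in a one-sentence remark.
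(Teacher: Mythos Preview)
Your primary argument is exactly the paper's proof: use $\Pr\lcp\mathscr{A}\mathscr{B}\rcp\leq\Pr\lcp\mathscr{A}\rcp$ and $\Pr\lcp\mathscr{A}\mathscr{B}\rcp\leq\Pr\lcp\mathscr{B}\rcp$, multiply, and take the square root. The Cauchy--Schwarz remark is a correct extra observation not present in the paper, but the core route is identical.
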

\begin{IEEEproof}
Since $\Pr \lcp \mathscr{A} \mathscr{B} \rcp \leq \Pr \lcp
\mathscr{A} \rcp$ and $\Pr \lcp \mathscr{A} \mathscr{B} \rcp \leq
\Pr \lcp \mathscr{B} \rcp$, it follows that $\Pr ^2 \lcp
\mathscr{A} \mathscr{B} \rcp \leq \Pr \lcp \mathscr{A} \rcp \Pr
\lcp \mathscr{B} \rcp$.
\end{IEEEproof}
We use previous simple result in the following proposition.

\begin{proposition}
Let $\y \in \R^k$ be a Gaussian vector having average $\mub$ and
covariance matrix $\Sigmab$. Suppose $m\leq k$ and let
$\Sigmab_{k-m}$ be the matrix obtained by taking the first $k-m$ rows
and $k-m$ columns of $\Sigmab$. Let $\lambda_{k-m, \max}$ be the
maximum eigenvalue of $\Sigmab_{k-m}$. Consider the subsets
$\mathscr{Y}_l= \lcp y_l \in [\underline{y}_l , \bar{y}_l] \rcp$ for
$l=1\ldots k$. Then
\begin{align}
& \Pr \lcp \mathscr{Y}_1 \mathscr{Y}_2 \ldots \mathscr{Y}_k \rcp  \\
& \quad \leq \sqrt{ \Pr \lcp \mathscr{Y}_{k} \mathscr{Y}_{k-1} \ldots \mathscr{Y}_{k-m+1} \rcp } \frac{\lambda_{k-m,\max}^{\frac{k-m}{4}}}{\det^{1/4} \Sigmab_{k-m}}  \times \prod_{l=1}^{k-m} \Pr^{1/2} \lcp y_l \in \left[\frac{\sigma_l\underline{y}_l}
{\sqrt{\lambda_{k-m, \max}}}, \frac{\sigma_l\bar{y}_l}{\sqrt{\lambda_{k-m, \max}}}\right] \rcp \,. \nonumber
\end{align}
\end{proposition}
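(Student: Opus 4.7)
The plan is to split the $k$ events into two blocks and then chain together the two earlier results. First I would write
\begin{equation*}
\Pr\{\mathscr{Y}_1 \mathscr{Y}_2 \cdots \mathscr{Y}_k\} = \Pr\bigl\{(\mathscr{Y}_1 \cdots \mathscr{Y}_{k-m})\,(\mathscr{Y}_{k-m+1} \cdots \mathscr{Y}_k)\bigr\}
\end{equation*}
and apply Lemma~\ref{fact:approxjointprob} to the two compound events $\mathscr{A}=\mathscr{Y}_1\cdots\mathscr{Y}_{k-m}$ and $\mathscr{B}=\mathscr{Y}_{k-m+1}\cdots\mathscr{Y}_k$, yielding
\begin{equation*}
\Pr\{\mathscr{Y}_1 \cdots \mathscr{Y}_k\} \leq \sqrt{\Pr\{\mathscr{Y}_1 \cdots \mathscr{Y}_{k-m}\}}\,\sqrt{\Pr\{\mathscr{Y}_{k-m+1}\cdots \mathscr{Y}_k\}}.
\end{equation*}
The second square root factor already matches the one appearing in the statement (up to reversing the order of the intersection, which of course does not change the probability).

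Next, I would focus on bounding the first factor. The key observation is that the marginal distribution of the sub-vector $(y_1,\ldots,y_{k-m})^T$ is Gaussian with mean given by the first $k-m$ components of $\mub$ and covariance matrix $\Sigmab_{k-m}$ (the leading $(k-m)\times(k-m)$ principal submatrix of $\Sigmab$). Applying the upper bound~\eqref{eq:gub} of Lemma~\ref{lemma:gulb} to this marginal gives
\begin{equation*}
\Pr\{\mathscr{Y}_1 \cdots \mathscr{Y}_{k-m}\} \leq \frac{\lambda_{k-m,\max}^{(k-m)/2}}{\sqrt{\det \Sigmab_{k-m}}} \prod_{l=1}^{k-m} \Pr\!\left\{ y_l \in \left[\frac{\sigma_l \underline{y}_l}{\sqrt{\lambda_{k-m,\max}}},\frac{\sigma_l \bar{y}_l}{\sqrt{\lambda_{k-m,\max}}}\right]\right\}.
\end{equation*}
Taking the square root of both sides and substituting into the previous display produces exactly the claimed inequality.

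The only non-routine step is justifying that Lemma~\ref{lemma:gulb} applies to $(y_1,\ldots,y_{k-m})^T$ with the eigenvalue of $\Sigmab_{k-m}$ rather than of $\Sigmab$: this is where the marginalization property of the multivariate Gaussian is used, and it is the main (though standard) obstacle to articulate cleanly. Once this marginalization is invoked, the rest is algebra: squaring roots of Lemma~\ref{lemma:gulb}'s upper bound and collecting the $1/4$-exponents on $\lambda_{k-m,\max}$ and $\det\Sigmab_{k-m}$, together with the $1/2$-exponents on the univariate probabilities. I would not expect to need any additional convexity or correlation hypothesis, since Lemma~\ref{fact:approxjointprob} handles arbitrary dependence between the two blocks.
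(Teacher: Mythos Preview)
Your proposal is correct and matches the paper's proof exactly: split the intersection into two blocks via Lemma~\ref{fact:approxjointprob}, then apply the upper bound of Lemma~\ref{lemma:gulb} to the block $\mathscr{Y}_1\cdots\mathscr{Y}_{k-m}$ and take square roots. Your explicit remark that the marginal of $(y_1,\ldots,y_{k-m})$ is Gaussian with covariance $\Sigmab_{k-m}$ is precisely the point the paper leaves implicit when it says ``the proposition easily follows.''
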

\begin{proof}
From Lemma~\ref{fact:approxjointprob} 
\begin{align}\label{eq:PrG}
& \Pr \lcp \mathscr{Y}_1 \mathscr{Y}_2 \ldots \mathscr{Y}_k \rcp \leq \sqrt{ \Pr \lcp  \mathscr{Y}_{k} \mathscr{Y}_{k-1}
\ldots \mathscr{Y}_{k-m+1} \rcp } \sqrt{ \Pr \lcp \mathscr{Y}_1
\mathscr{Y}_2 \ldots \mathscr{Y}_{k-m} \rcp } \,.
\end{align}
By applying Lemma~\ref{lemma:gulb} to the second probability of the
right end-side of previous inequality, the proposition easily follows.
\end{proof}

We are now in the position to derive expression for the probabilities of handover and outage, respectively. 

\subsection{Probability of Handover}\label{sec:handover-prob}

In this subsection, we provide an expression of the probability of
handover. The following result holds:
\begin{proposition} \label{prop:handover-prob}
Consider the serving BS and the strongest candidate BS. The probability of handover at time $n$ is
\begin{align} \label{eq:PH}
P_H(n) = & P_{H01}(n) + P_{H10}(n) \,,
\end{align}
where
\begin{align}
P_{H01}(n) & = \Pr[\mathscr{N}(n)\mathscr{E}(n-1)]\,, \\
P_{H10}(n) & = \Pr[\mathscr{L}(h(n))\mathscr{\bar E}(n-1)]\,.
\end{align}
\end{proposition}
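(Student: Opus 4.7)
The plan is to identify $P_H(n)$ as the probability that the discrete switching variable $b$ flips between times $n-1$ and $n$, partition this into the two mutually exclusive directions of flip, and then use the recursive event identities in Eqs.~\eqref{epsilon} and~\eqref{notepsilon} to reduce each resulting joint probability to the claimed form.

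First I would observe that a handover at time $n$ is, by definition of the process, a transition of $b(\cdot)$ between the two BSs, so that
\[
P_H(n) = \Pr[\mathscr{E}(n)\,\mathscr{\bar E}(n-1)] + \Pr[\mathscr{\bar E}(n)\,\mathscr{E}(n-1)] \,,
\]
where the two summands are disjoint because they correspond to opposite flip directions ($b$ going from $0$ to $1$ versus from $1$ to $0$).

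Next, for the first joint event, I would substitute Eq.~\eqref{epsilon} for $\mathscr{E}(n)$ and intersect with $\mathscr{\bar E}(n-1)$. The summand of~\eqref{epsilon} that is conditioned on $\mathscr{E}(n-1)$ is annihilated by $\mathscr{\bar E}(n-1)$, and the remaining part $\{y(n)<-h(n)\}\cap\mathscr{\bar E}(n-1)$ coincides with $\mathscr{L}(h(n))\,\mathscr{\bar E}(n-1)$, since the boundary $\{y(n)=-h(n)\}$ has probability zero under the continuous Gaussian law of $y(n)$ derived in Section~\ref{sec:indexes}. This identifies the first term with $P_{H10}(n)$. The symmetric manipulation applied to $\mathscr{\bar E}(n)\cap\mathscr{E}(n-1)$ via Eq.~\eqref{notepsilon} yields $\mathscr{N}(h(n))\,\mathscr{E}(n-1)$, i.e., $P_{H01}(n)$. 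Adding the two pieces produces Eq.~\eqref{eq:PH}.

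The only delicate point is that the two summands on the right-hand side of Eq.~\eqref{epsilon} are not disjoint a priori, so one has to be careful about double counting. What makes the argument go through is that the overlap is exactly the part of $\mathscr{E}(n)$ that meets $\mathscr{E}(n-1)$, and this overlap is annihilated the instant we intersect with $\mathscr{\bar E}(n-1)$; the analogous observation takes care of~\eqref{notepsilon}. Once this is recognised, the proof reduces to a short set-theoretic simplification together with continuity of $y(n)$ to discard measure-zero boundaries.
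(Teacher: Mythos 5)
Your proof is correct and follows essentially the same route as the paper: decompose the handover event into the two mutually exclusive flip directions $\mathscr{E}(n)\mathscr{\bar E}(n-1)$ and $\mathscr{\bar E}(n)\mathscr{E}(n-1)$, then use the recursions \eqref{epsilon}--\eqref{notepsilon} so that the term conditioned on the previous state is annihilated by the intersection, leaving $\mathscr{L}(h(n))\mathscr{\bar E}(n-1)$ and $\mathscr{N}(h(n))\mathscr{E}(n-1)$. Your set-theoretic simplification is in fact cleaner than the paper's intermediate chain, and your explicit remark that the boundary $\{y(n)=-h(n)\}$ has zero probability under the Gaussian law is a point the paper glosses over.
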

\begin{IEEEproof}
The occurrence of handover events can be described by the following iterative expression $
\mathscr{H}(n) =  \mathscr{H}_{01}(n) + \mathscr{H}_{10}(n) =
\mathscr{E}(n)\mathscr{\bar E}(n-1) + \mathscr{\bar E}(n)
\mathscr{E}(n-1)$\,.
Consider the event $\mathscr{H}_{01}$, then
\begin{align*}
\mathscr{H}_{01}(n) = & \mathscr{E}(n)\mathscr{\bar E}(n-1) =  \mathscr{N}(n)\mathscr{L}(n-1)+ \mathscr{N}(n)\mathscr{\bar N}(n-1)\mathscr{E}(n-1) \\
=& \mathscr{N}(n)\mathscr{L}(n-1)+ \mathscr{N}(n)\mathscr{M}(h(n-1))\mathscr{E}(n-1) = \mathscr{N}(n)\mathscr{L}(n-1)+ \mathscr{N}(n)\mathscr{E}(n-1) \\
=& \mathscr{N}(n)\mathscr{E}(n-1)\,,
\end{align*}
where the third equality turns out by observing that $\mathscr{\bar N}(n-1)
\subseteq \mathscr{L}(n-1)$, and the last equality results from
$\mathscr{N}(n)\mathscr{L}(n-1) \subseteq
\mathscr{N}(n)\mathscr{E}(n-1)$.

By following the same arguments, $ \mathscr{H}_{10}(n) = \mathscr{\bar E}(n)\mathscr{E}(n-1) =
\mathscr{L}(h(n))\mathscr{\bar E}(n-1)$\,.

Notice that $\mathscr{H}_{10}(n)$ and $\mathscr{H}_{01}(n)$ are
mutually exclusive. Therefore the proposition follows.
\end{IEEEproof}

As observed for calculation of the
probabilities~\eqref{eq:PrEn} and~\eqref{eq:PrnotEn}, it may be
challenging to compute the
probability~\eqref{eq:PrG}
 by the exact Gaussian multivariate distribution. Hence, we can use the approximations
proposed in the previous subsections.


\subsection{Outage Probability} \label{sec:outage-prob}

In this subsection we derive the expression of the probability of
outage. The events of outage when MT is connected to the serving and to the strongest candidate BS are defined as 
\begin{align}
& \mathscr{P}_0(n)=\{p_0(n) \leq \beta\} \,, \\
& \mathscr{P}_1(n)=\{p_1(n) \leq \beta\} \,.
\end{align}
Then we have the following result:
\begin{proposition} \label{prop:outage-prob}
The outage probability at time $n$ is
\begin{align} \label{eq:PH}
P_O(n) = & P_{O0}(n) + P_{O1}(n) \,,
\end{align}
where
\begin{align}
P_{O0}(n) = \Pr[\mathscr{P}_0(n) |\mathscr{\bar E}(n)] = \frac{\Pr[\mathscr{P}_0(n) \mathscr{\bar E}(n)]}{\Pr[\mathscr{\bar E}(n)]}\,, \\
P_{O1}(n) = \Pr[\mathscr{P}_1(n)| \mathscr{E}(n)] =
\frac{\Pr[\mathscr{P}_1(n)\mathscr{E}(n)]}{\Pr[\mathscr{E}(n)]}\,.
\end{align}
\end{proposition}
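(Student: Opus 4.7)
The plan is to justify the stated decomposition by splitting the outage event according to the current connection state and then to obtain the ratio representation by direct appeal to the definition of conditional probability. The starting point is the observation that at time $n$ the mobile terminal is connected either to BS$_0$, corresponding to event $\mathscr{\bar E}(n)$, or to BS$_1$, corresponding to event $\mathscr{E}(n)$; by construction in Section~\ref{sec:gels} these two events partition the sample space. Outage on the serving link means that the signal strength received from the currently connected BS falls below the threshold $\beta$, which under $\mathscr{\bar E}(n)$ is exactly $\mathscr{P}_0(n)$ and under $\mathscr{E}(n)$ is exactly $\mathscr{P}_1(n)$.

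First I would introduce the per-state conditional outage probabilities $P_{O0}(n)=\Pr[\mathscr{P}_0(n)\mid \mathscr{\bar E}(n)]$ and $P_{O1}(n)=\Pr[\mathscr{P}_1(n)\mid \mathscr{E}(n)]$, each representing the probability that the serving link is in outage conditional on the corresponding connection state. The aggregate outage metric $P_O(n)$ is then defined by summing these two contributions over the mutually exclusive connection states $\mathscr{E}(n)$ and $\mathscr{\bar E}(n)$. Invoking the definition of conditional probability $\Pr[A\mid B]=\Pr[AB]/\Pr[B]$ immediately yields
\begin{equation*}
P_{O0}(n)=\frac{\Pr[\mathscr{P}_0(n)\mathscr{\bar E}(n)]}{\Pr[\mathscr{\bar E}(n)]},\qquad P_{O1}(n)=\frac{\Pr[\mathscr{P}_1(n)\mathscr{E}(n)]}{\Pr[\mathscr{E}(n)]},
\end{equation*}
with the denominators supplied directly by Corollary~\ref{coro:prob-epsilon}. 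Summing the two ratios delivers the claimed expression.

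The substantive work sits in the numerators $\Pr[\mathscr{P}_s(n)\mathscr{E}(n)]$ and $\Pr[\mathscr{P}_s(n)\mathscr{\bar E}(n)]$: expanding $\mathscr{E}(n)$ and $\mathscr{\bar E}(n)$ via Proposition~\ref{prop:epsilon} with $m=0$ converts each into a sum of products of events of type $\mathscr{L}, \mathscr{M}, \mathscr{N}$ on the correlated Gaussian sequence $y(\cdot)$, further intersected with the Gaussian threshold event $\mathscr{P}_s(n)$ on $p_s(n)$. Mutual exclusivity of the expanded events is preserved under intersection with $\mathscr{P}_s(n)$, so the sum-of-probabilities structure carries through unchanged and no new cross terms appear. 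I expect the main obstacle to lie precisely here: the intersection with $\mathscr{P}_s(n)$ enlarges the dimension of the multivariate Gaussian integration by one, since $p_s(n)$ is correlated with the $l_s(\cdot)$ and hence with $y(\cdot)$ through the filter coefficients $G_s(n,i)$. Evaluating these joint probabilities in closed form is therefore computationally heavy, and the bounds of Sections~\ref{subsec:ulb1}--\ref{subsec:ulb3} must be invoked to obtain tractable expressions; the decomposition itself, however, is essentially definitional and follows once the partition and the conditional-probability identity are in place.
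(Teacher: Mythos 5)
Your proposal is correct and follows essentially the same route as the paper's own (very brief) proof: decompose the outage event according to the mutually exclusive connection states $\mathscr{E}(n)$ and $\mathscr{\bar E}(n)$, apply the definition of conditional probability to obtain the ratio forms, and sum. The additional discussion of expanding the numerators via Proposition~\ref{prop:epsilon} and resorting to the approximations is consistent with the remarks the paper makes immediately after the proposition, but is not needed for the proof itself.
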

\begin{IEEEproof}
The occurrence of the outage events is described by
\begin{align*}
\mathscr{O}(n)& =\mathscr{O}_0(n)+\mathscr{O}_1(n) = \mathscr{P}_0(n) |\mathscr{\bar E}(n) +
\mathscr{P}_0(n)|\mathscr{E}(n)\,,
\end{align*}
from which the proposition follows immediately by considering that
$\mathscr{O}_0(n)$ and $\mathscr{O}_1(n)$ are mutually exclusive.
\end{IEEEproof}

As for the calculation of the
probabilities of handover, it may be quite expensive to
compute the probability of outage by the Gaussian multivariate distribution. Hence, we use the
approximations proposed in the previous subsections. The accuracy of these approximations/bounds is discussed in Section VI. 

Now that we have characterized the expressions of the base station probability, outage probabilities, and handover probabilities we can turn our attention to the optimization of the handover. 

\section{Handover Optimization}\label{sec:optprob}

As for the hybrid system model represented by Eqs.~\eqref{Sn1} --~\eqref{dn}, the GELS algorithm uses a hysteresis margin for the handover decision. In a dynamic environment, the performance of the handover can be enhanced by selecting the hysteresis margin according to adaptive optimality criteria.
%
In the following, we propose three optimization
criteria that are based on dynamic programming \cite{bertsekas95}. 

The first strategy proposes the minimization of the probability of handover, while keeping under control the outage probability. This approach relies on the alreary mentioned rationale that completing a handover process is expensive due to the costs of transferring the connection from one base station to another one. Thus, it is beneficial to minimize the probability of handover as long as the probability of outage stays below a threshold. A second optimization approach proposes the reverse: the outage probability is minimized while the handover probability is kept under control. This is especially important for communications that need to have the highest successful packet reception probability, since fewer outage events allow to improve the successful bit decoding rates. Finally, the third approach proposes the minimization of the weighted combination of the two probabilities by a Pareto optimization method. The tradeoff between outage probability and handover probability is consistent, as it is typically observed that the outage probability increases and the handover probability decreases with the hysteresis margin.
The dynamic programming nature of the optimization problems is accounted for the definition of the cost function, which considers future evolutions of the hysteresis thresholds. In the following, we present the three methods. 

\subsection{Probability of Handover Optimization}

In this subsection we propose the optimization of the handover
probability under outage constraints. More specifically, here we
investigate the following dynamic optimization problem:
\begin{subequations} \label{eq:phstpo}
\begin{align}
\min_{\h(n)} \quad & \sum_{l=n}^{n+m} P_{H b(l)}(l)  \label{objeq:phstpo}\\
{\rm s.t.} \quad & P_{Ob(l)}(l) \leq P_{\rm out}\,, \quad l=n,\ldots,n+m \nonumber \\
& b(l+1)=\mathscr{E}(n+1) \,, \quad l=n,\ldots,n+m  \\
& S(l+1)=A(l)S(l)-f(d(l+1), d(l))+W(l) \quad  l=n,\ldots,n+m
\end{align}
\end{subequations}
In such a problem, the decision variables are the hysteresis
thresholds $h(l)$ for $l=n,\ldots,n+m$, which we collect in the
vector $\h(n)=[h(n) \ldots h(n+m)]^T$. Note that $S(l)$ is given in Eq.~\eqref{SN}, the probability of outage is given by Proposition~\ref{prop:handover-prob}, and the probability of Handover is given by Proposition~\ref{prop:outage-prob}. 

At each time instant $n$, the mobile station tries to
minimize the probability of handover over a time window that spans from the current time instant up
to a future instant that is $m$ sampling times ahead of $n$. The
handover probability is minimized while taking into account outage
events, which motivates the outage probability constraint for
ensuring an adequate quality of the communication. In other words,
we impose that at each time instant $l$, $l=n,\ldots,n+m$, the
outage probability must be below a maximum value $P_{\rm out}$.
The last constraint of the optimization problem returns the BS 
 $b(l+1)$ at which the MS is connected to at
time $l+1$ when a hysteresis threshold $h(l)$ is decided at time
$l$. Such a mobile station will then determine computation of
the handover probability $P_{H b(l+1)}(l+1)$ at time $l+1$.

Such an optimization involves a prediction of future
evolutions of the wireless channel. The memory of the channel is
finite owing to the coherence time \cite{proakis}. That is why a
prediction can be efficiently done over a finite time interval $m$. The dynamic
optimization that we are proposing is
motivated by observing that choosing a hysteresis threshold $h(n)$
at time $n$ determines the handover decisions and outage events of
the future times. Therefore, an optimization of the handover
looking just at a present time may have negative consequences in
the future, and needs to be done dynamically. 

In case that $m=1$, it is easy to show that it is a Fast-Lipschitz optimization problem~\cite{Fischione11} and thus very easy to solve. When $m\neq 1$, the problem becomes more complex. The difficulty arises by the fact that it is not convex due to the non-convexity of the cost function and by the fact that the selection of the optimal $h(l)$ affects the selection of $h(l+1)$, $h(l+2)$, and so on, due to the switching mechanism between BSs. We propose later in Subsection~\ref{solutionmethod} an algorithm to solve that problem.

\subsection{Probability of Outage Optimization}

Here we pose the optimization problem
\begin{subequations} \label{objfunc}
\begin{align}
\min_{\h(n)} \quad & \sum_{l=n}^{n+m} P_O(l) \label{objeq:objfunc} \\
{\rm s.t.} \quad & P_H(l) \leq P_{\rm han} \quad l=n,\ldots,m  \\
& b(l+1)=\mathscr{E}(n+1)  \quad l=n,\ldots,m \\
& S(l+1)=A(l)S(l)-f(d(l+1), d(l))+W(l) \quad l=n,\ldots,n+m
\end{align}
\end{subequations}
where the objective is the minimization of the outage probability subject to that the handover probability is limited by a maximum threshold $P_{\rm han}$. 
The decision variables are the hysteresis
thresholds $h(l)$ for $l=n,\ldots,n+m$, which are collected in the
vector $\h(n)=[h(n) \ldots h(n+m)]^T$ as for the previous optimization problem. The optimization takes into account the future evolution of the outage probability, because a handover decision taken at the current time $n$ will affects future events of the outages due to the switching of the BS. 

The solution of this optimization problem faces the same challenges as the problem~\eqref{eq:phstpo} does. Therefore, we follow the approach presented in Subsection~\ref{solutionmethod} to solve the optimization problem. Next, we propose a problem formulation that combines the previous two optimization problems. 

\subsection{Handover and Outage Pareto Optimization}

A more complex approach consists in solving an optimization problem where the
objective function is defined in terms of both outage and handover
probabilities:
\begin{equation} \label{Jn}
J(b(n),h(n))=\sum_{l=n}^{n+m} z \cdot P_H(l)+ (1-z)P_O(l) \:,
\end{equation}
where $P_H(l)$ is the handover probability at time $l$, $P_O(l)$
is the outage probability at time $l$, $z$ is a weighting
coefficient to tradeoff the performance in terms of outages or
handovers, and $m$ is the time horizon. The objective function is
therefore a weighted sum of handover and outage probability. In
the notation adopted for the cost function, we have evidenced the
dependance on the hysteresis $h(n)$ and the base station $b(n)$ at
which the mobile station is connected to. Thus, we can formulate the
following optimization problem
\begin{subequations} \label{objfunct}
\begin{align}
\min_{\h(n)} \quad & J(b(n),h(n)) \label{objeq:objfunct} & \\
& b(l+1)=\mathscr{E}(n+1)  \quad l=n,\ldots,m \\
& S(l+1)=A(l)S(l)-f(d(l+1), d(l))+W(l) \quad  l=n,\ldots,n+m
\end{align}
\end{subequations}
When this optimization problem is compared to the previous two problems, it is obvious that it is even more difficult to solve due to the complexity of the cost function. In the next subsection, we propose a solution algorithm for the problems \eqref{eq:phstpo}, \eqref{objfunc} and \eqref{objfunct}.


\begin{figure}[t]\vspace{-18mm}
\centering
\psfrag{ controller}{\ Optimization }
\psfrag{ estimator}{\ Estimator}
\psfrag{ system}{\ System}
\psfrag{hysteresis}{$h_n$}
\psfrag{distance}{$d_s(n)$}
\psfrag{power}{$p_s(n)$}
\psfrag{coefficient}{$G_s(n)$}
\psfrag{state}{$x(n)$}
\includegraphics[width=0.8\textwidth]{./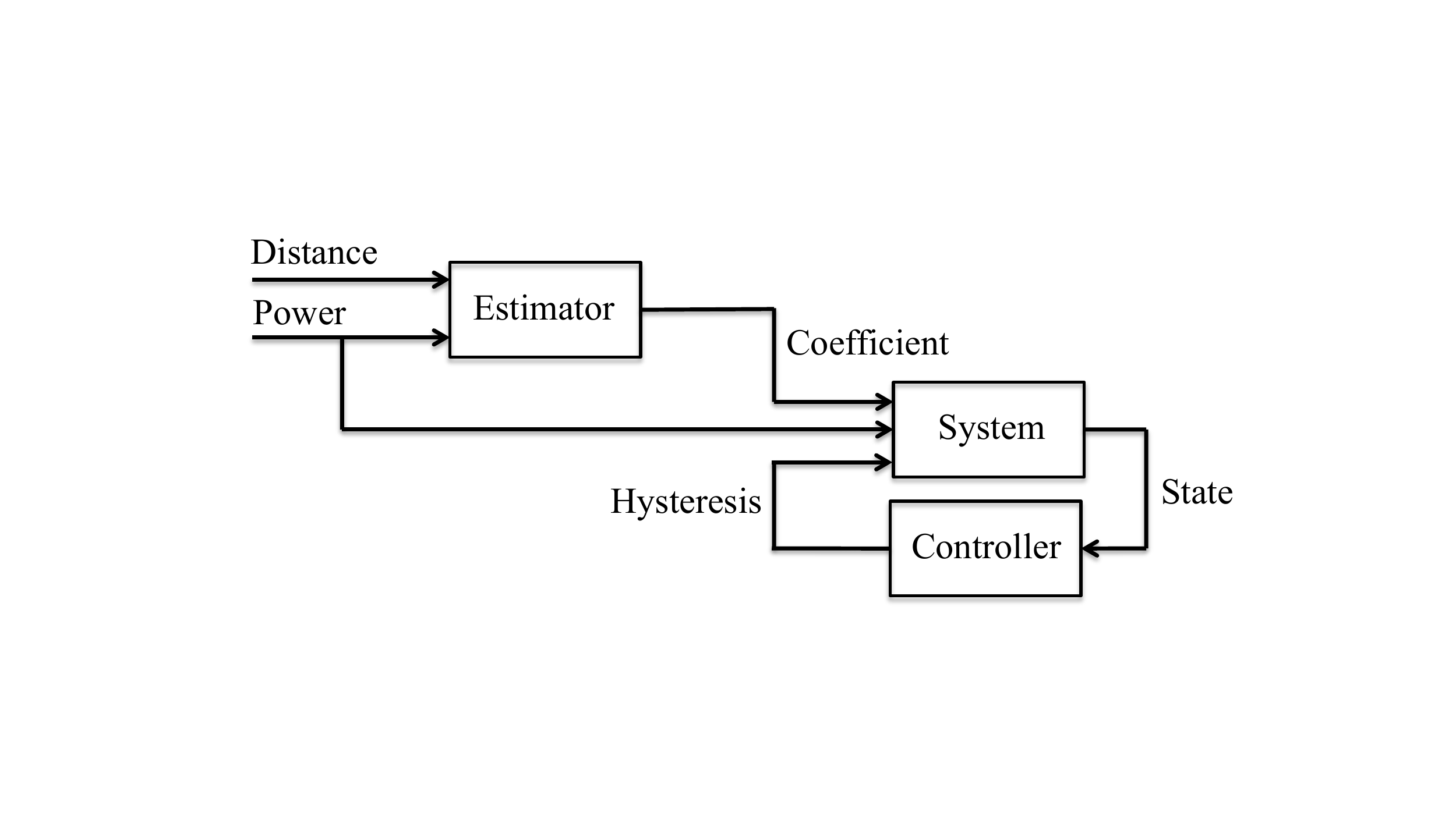} \vspace{-18mm}
\caption{Optimization scheme. For every time instant, an estimation of the channel coefficients is followed by an optimization of the hysteresis margin. The application of the optimal margin will determine the next connection to a Base Sation} \label{control}
\end{figure}

\subsection{Solution Method} \label{solutionmethod}
In this subsection, we propose an algorithm to solve in practice optimization problems \eqref{eq:phstpo}, \eqref{objfunc} and \eqref{objfunct}. 

Given the dynamic programming nature of the optimization with a binary variable (the base station) and a real variable (the hysteresis margin), 
we propose the use of an algorithm based on a trellis diagram (as depicted in Fig~\ref{control}). Specifically, every
stage of the trellis is associated to a time span from $n$ till $m$. At time $n$, the trellis has one state corresponding to the current base station $b(n)$. For the time instants $n+1$, $n+2$, $\ldots$ $n+m$, there are a number of possible states corresponding to one of the base stations the MT can be connected to. The transition from the base station at time $n$ to one of the next base stations at time $n+1$ has associated a probability of handover or a probability of outage. In Figs.~\ref{fig:trellis-handover} and~\ref{fig:trellis-outage}, we report two examples of trellis for the case of two base stations and $m=4$. The optimization algorithm works by the use of the trellis as follows:

\begin{enumerate}

\item for every path starting from $b(n-1)$ and ending to
one of the possible values of $b(n+m)$, the objective
function is computed as a function of the hysteresis. Depending on problem \eqref{eq:phstpo}, \eqref{objfunc} and \eqref{objfunct}, the objective function will be given by 
\eqref{objeq:phstpo}, \eqref{objeq:objfunc}, and \eqref{objeq:objfunct}, respectively. 

\item for every path, the hysteresis value that minimizes the cost function
corresponding to that path is computed; 

\item once the hysteresis values are known, it is possible to compute numerically the objective
function associated to every path, and thus the actual cost path;

\item the path with the minimum numerical objective function gives the value of $b(n+1)$ and thus next base station; 

\item the trellis diagram goes to the next state, when a new value of the fading parameters is produced. The
trellis is updated by removing the last stage, and adding a new one.
\end{enumerate}

In the following section, we illustrate the application of this algorithm and provide numerical results. 



\begin{figure}[t]
  \centering
  \includegraphics[width=0.6 \textwidth]{./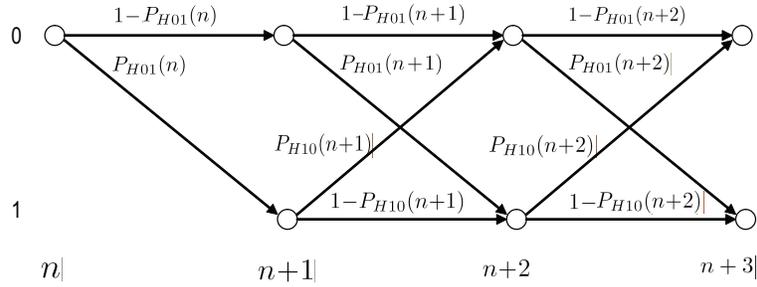}
  \caption{Trellis diagram for the handover probabilities in the case of $m=4$ and two base stations.}
  \label{fig:trellis-handover}
  \end{figure}
  
\begin{figure}[t]\vspace{-12mm}
  \centering
  \includegraphics[width=0.6 \textwidth]{./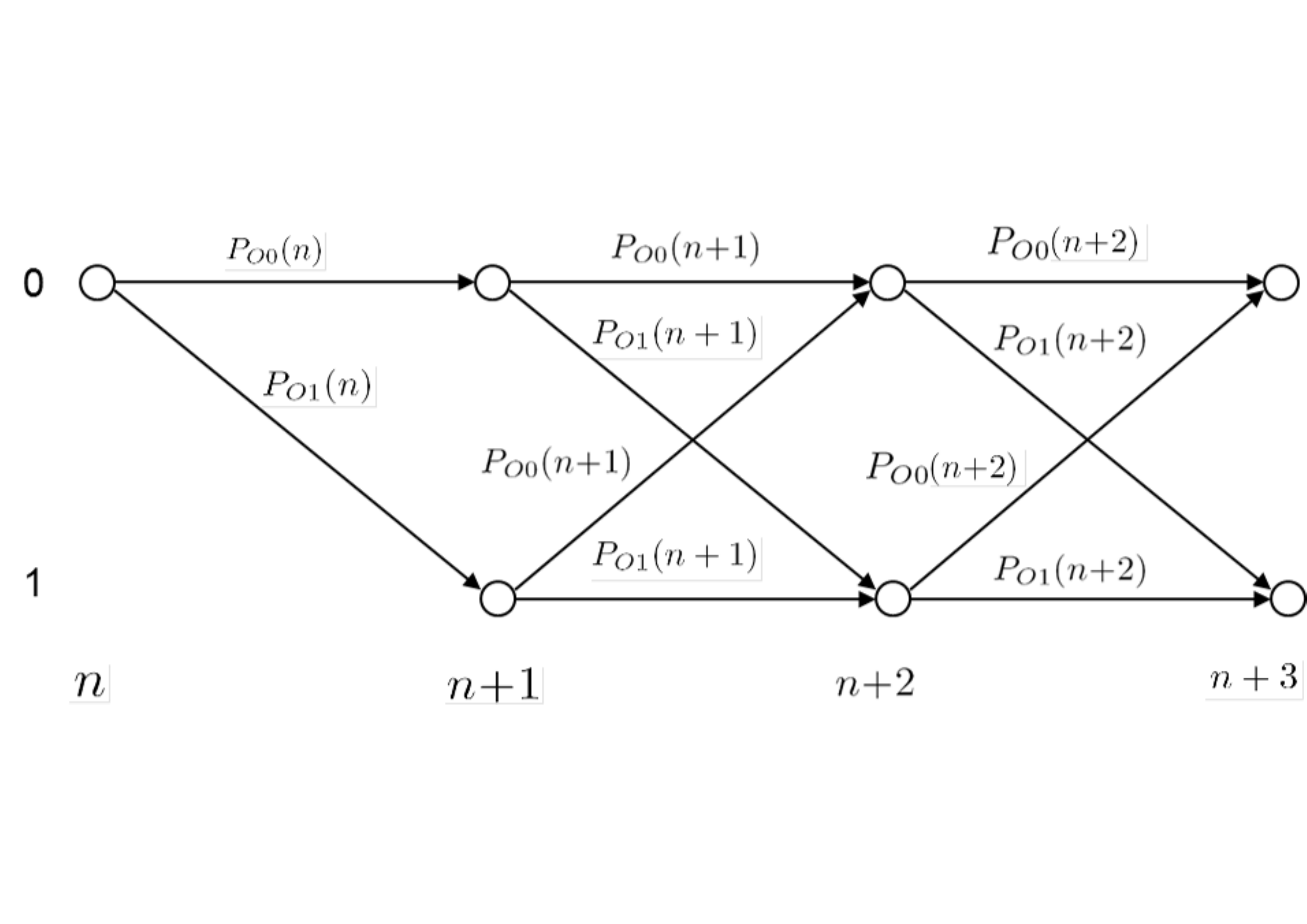}\vspace{-14mm}
  \caption{Trellis diagram for the outage probabilities in the case of $m=4$ and two base stations.}
  \label{fig:trellis-outage}
  \end{figure}

\section{Numerical results}\label{sec:results}

In this section we present the evaluation study of the adaptive handover approach in both a two-cells and in a multi-cell wireless environment and we compare our results to an existing method in literature.

First, we solve the optimization problems of Section \ref{sec:optprob} based on the trellis algorithm of Subsection~\ref{solutionmethod}. We make appropriate use of the approximations proposed above to reduce computational cost towards the calculation of the optimal $h(n)$ values. We then present an extensive study of the accuracy of these approximations. Lastly, we perform a simulation study in a multi-cell environment.

We start by describing the system settings used in the simulation runs. We assume that the MT is moving along a straight line towards the cell boundary. Since we mainly refer to vehicular communications over roads, the cells are assumed to have a nominal radius of $1\:$Km. Recall that there is need of handover in a region which is close to the cell boundaries. Thus, we take into account a path of total distance of $500\:$m starting at $750\:$m far from BS$_0$. The coherence interval of the wireless (shadowed) channel is assumed to be $\bar{d}=20$ m, which implies that predicted values of the wireless channel coefficients are actually effective only up to $20$ meters far from the starting point. This means that, if the current BS and the hysteresis value are known at time $n-1$, the future values of these parameters can
be predicted up to the time instant $n+3$. In fact, assuming a
standard sampling distance $d_c=v \cdot T_c=6.24 \textrm{m}$, where
$v=13$m/s and $T_c=0.48$s are the speed of the MT and the sampling
interval, respectively, we see that the number of the prediction
stages (and thus the number of the stages of the trellis diagram) is $ \bar{d}/d_c=4$.

In the following, we present some numerical results. 

\subsection{Optimal Hysteresis and Probabilities of Handover/Outage}
In this section we consider the simplest case where a MT is moving between two cells (from BS$_0$, Base Station 0, towards BS$_1$, Base Station 1). Our main target is to solve the optimization problems \eqref{eq:phstpo}, \eqref{objfunc} and \eqref{objfunct} through dynamic programming by the trellis structure as described in Subsection~\ref{solutionmethod}. In this way we get the optimal hysteresis threshold $h(n)$ at time $n$, which minimizes the objective functions, defined in the optimization problems.

Fig.~\ref{fig:opth1} plots the optimal $h(n)$ values that we get from optimization problem~\eqref{eq:phstpo}. The computation of the probabilities of handover and outage may be computationally expensive (since the multivariate Gaussian distribution may be computationally prohibitive). Therefore, we compare the optimal $h(n)$ values that we get by analytically computing the handover probability with those values that we obtain by applying the approximations presented in Section IV. We observe that $h(n)$ is kept at low levels when the MT is close to a BS. The optimal hysteresis margin grows as the MT moves towards the cell boundaries, in order to avoid unnecessary handovers in the system. The optimal $h(n)$ values that result when the approximations are applied is very close to the optimal $h(n)$ values resulted when the exact analytical computation of the probabilities is adopted.  

\begin{figure}[t]
  \centering
  \includegraphics[width=0.6 \textwidth]{./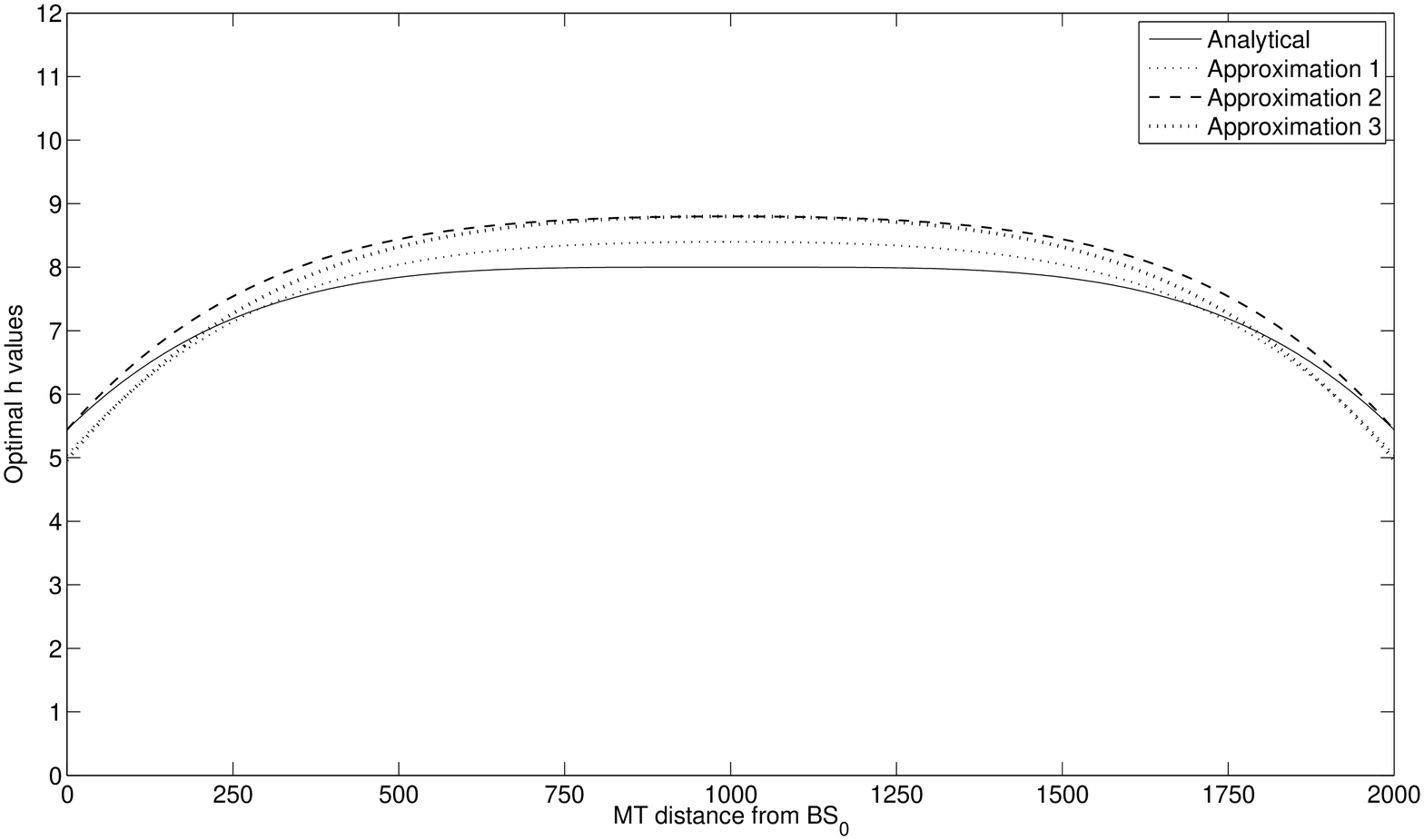}
  \caption{Optimal $h(n)$ values for optimization problem \eqref{eq:phstpo}.}
  \label{fig:opth1}
  \end{figure} 
  
\begin{figure}[t]
  \centering
  \includegraphics[width=0.6 \textwidth]{./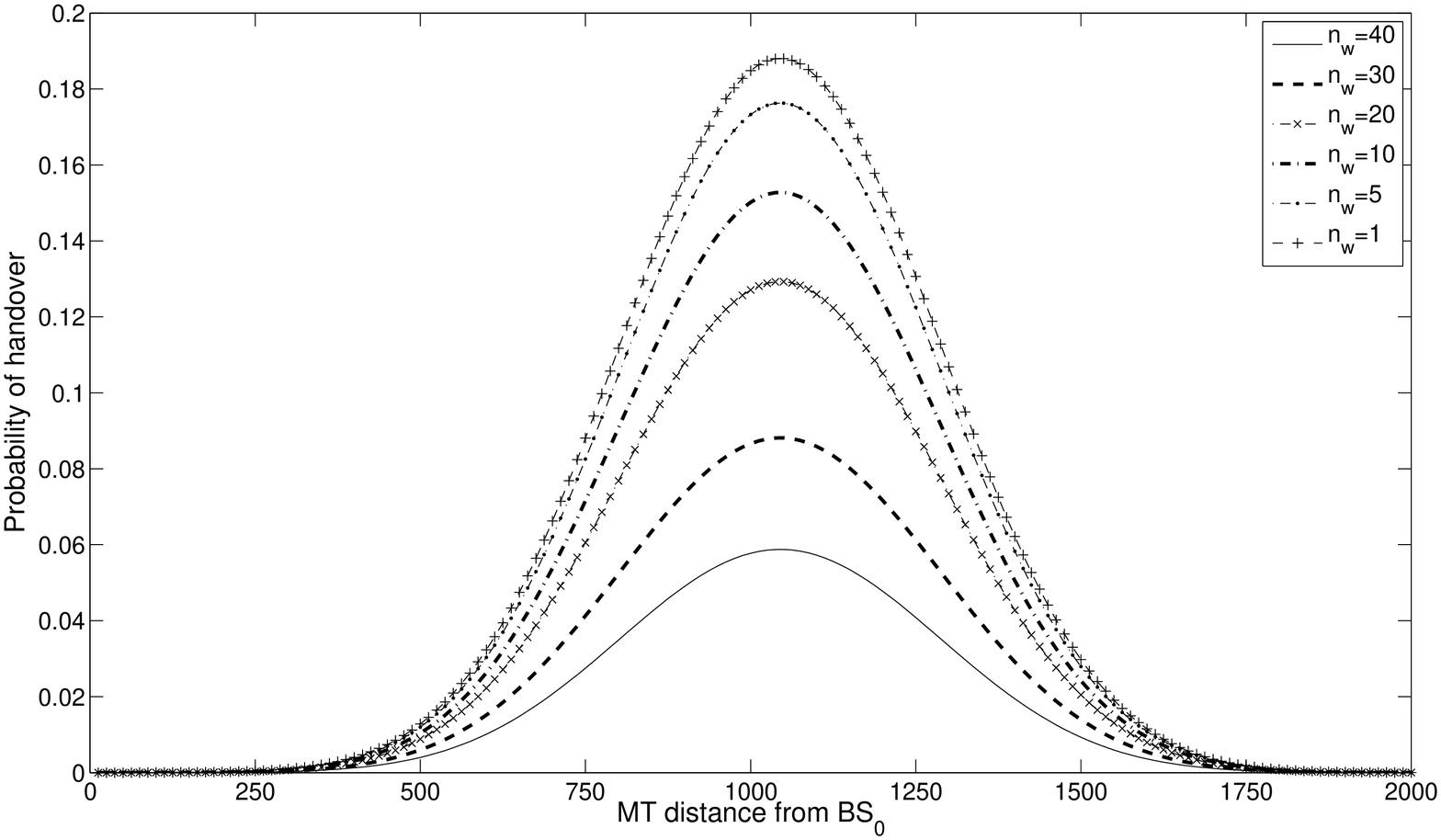}
  \caption{Optimal values of the probability of handover resulted from the solution of optimization problem~\eqref{eq:phstpo}.}
  \label{fig:P_h_opth1}
  \end{figure} 
  
We now study the behavior of the probability of handover when the window length $n_w$ is varied in the system. Fig.~\ref{fig:P_h_opth1} depicts the probability of handover while the MT moves towards BS$_1$. It is evident that the probability of handover decreases as the window length increases. This happens due to the reduced shadowing fluctuations that is guaranteed when large window length is used in the estimation process from the received signal strength. Moreover, we compare the probability of handover that is computed when constant hysteresis margins are used with the probability that is computed when the optimal $h(n)$ values are applied. Fig.~\ref{fig:P_opth1a} shows that higher hysteresis margins result in lower handover probabilities. The handover probability is minimized when the optimal $h(n)$ values are applied.  

   \begin{figure}[t]
  \centering
  \includegraphics[width=0.6 \textwidth]{./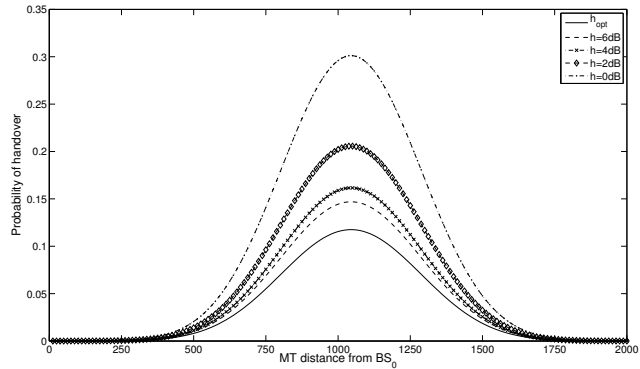}
  \caption{Optimal values of the probability of handover resulted from the solution of optimization problem~\eqref{eq:phstpo} (constant $h(n)$ values vs optimal).}
  \label{fig:P_opth1a}
  \end{figure} 
  
 \begin{figure}[t]
  \centering
  \includegraphics[width=0.6 \textwidth]{./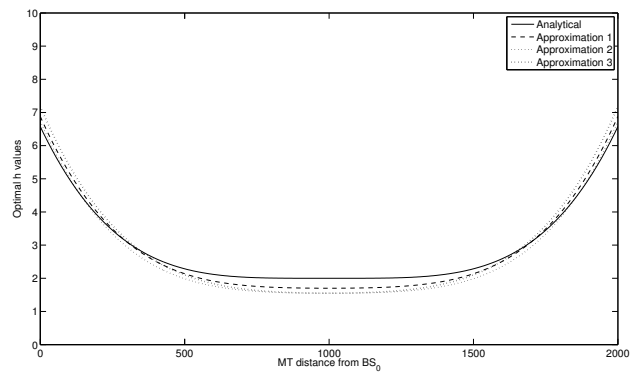}
  \caption{Optimal $h(n)$ values for optimization problem~\eqref{objfunc}.}
  \label{fig:opth2}
  \end{figure}
  
Fig.~\ref{fig:opth2} plots the optimal $h(n)$ values that we get for \eqref{objfunc}. We observe that the hysteresis margin decreases while the MT moves towards the cell boundaries, both when analytical computation of the outage probability is applied and when the approximations are used. $h(n)$ increases when the MT is close to the base stations. A general outcome here is that the behavior of $h(n)$ in \eqref{eq:phstpo} is in contrast to the behavior of the optimal hysteresis margin resulted from \eqref{objfunc}: as long as we are interested on minimizing outages, handovers are not prevented when the MT is much closer to a BS.

\begin{figure}[t]
  \centering
  \includegraphics[width=0.6 \textwidth]{./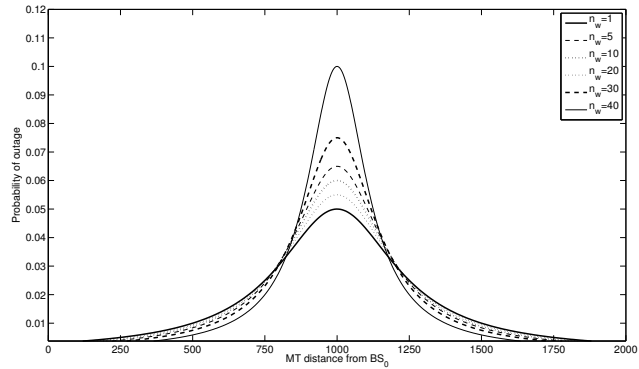}
  \caption{Optimal values of the probability of outage resulted from the solution of optimization problem~\eqref{objfunc}.}
  \label{fig:Pout_opt2_nw}
  \end{figure} 
  
  \begin{figure}[t]
  \centering
  \includegraphics[width=0.6 \textwidth]{./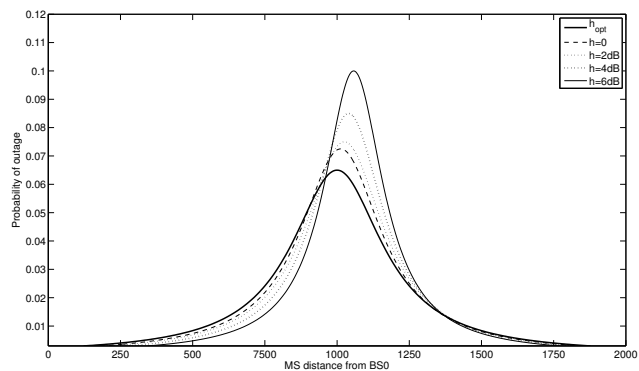}
  \caption{Optimal values of the probability of outage resulted from the solution of optimization problem~\eqref{objfunc} (constant $h(n)$ values vs optimal).}
  \label{fig:Pout_opt2_h}
  \end{figure} 
  
It is interesting to measure now the probability of outage while we vary the estimation window length $n_w$ and the hysteresis margin $h(n)$ in our system. From Figs. \ref{fig:Pout_opt2_nw} and \ref{fig:Pout_opt2_h} we observe that the behavior of the outage probability is in contrast to the behavior of the handover probability (similar effect to the variation of the hysteresis margin). We get higher outage probabilities values when high $n_w$ values are used. Moreover, as the hysteresis margin increases, the outage probability gets larger and the maximum shift to the right (handover delay when fixed hysteresis is used). Therefore, it is obvious that a trade-off exists in the minimization of the handover probability and the probability of outage, that must be controlled by the correct adaptation of the system parameters to achieve balanced network operation. 

Eq. \eqref{objfunct} tries to manage the existing trade-off by adapting $z$ according to the system characteristics and the application requirements. Fig.~\ref{fig:opth3} plots the optimal $h(n)$ values resulted from \eqref{objfunct} while varying the values of $z$. In addition, in Fig.~\ref{fig:opth3a} we observe the computation of the optimal handover and the outage probabilities, parametrized with respect to the weight coefficient $z$. We can see that large values of average handover probability result when $z$ is close to 0 (in that case the probability of outage is low). On the contrary, we get large average outage probability values when $z$ is close to 1 (in that case the probability of handover is low). A "knee" is present in the region where $z=0.6$. In order to efficiently manage the existing trade-off, there is a desire to operate close to that "knee" (by adapting $z$, $h(n)$ and $n_w$).

\begin{figure}[t]
  \centering
  \includegraphics[width=0.6 \textwidth]{./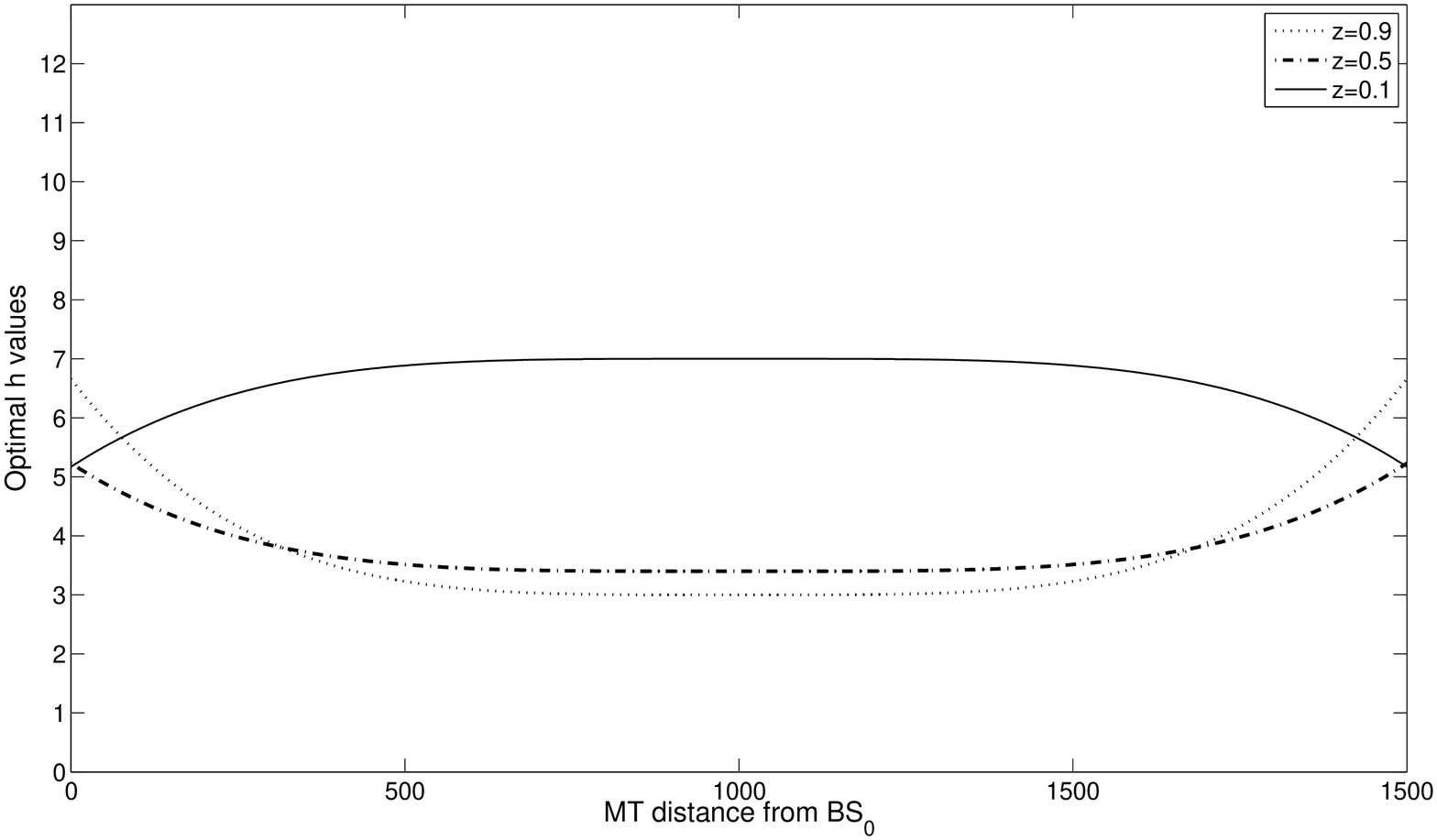}
  \caption{Optimal $h(n)$ values for the third optimization problem.}
  \label{fig:opth3}
  \end{figure}
  
\begin{figure}[t]
  \centering
  \includegraphics[width=0.6 \textwidth]{./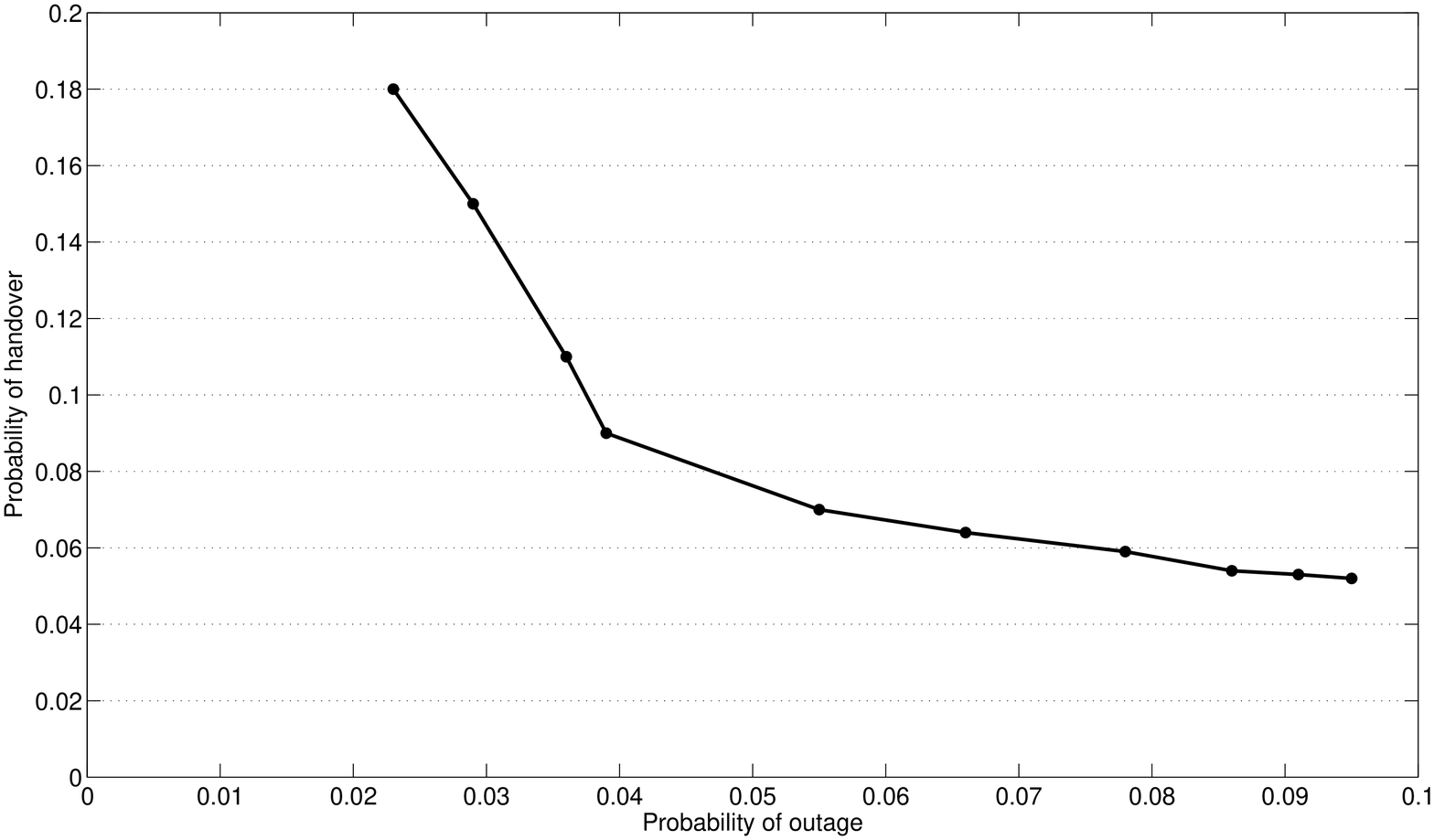}
  \caption{Handover and outage probability evaluation based on the optimal $h(n)$ values that resulted from optimization problem \eqref{objfunct}.}
  \label{fig:opth3a}
  \end{figure} 

In the next subsection, we focus on the case of a multi-cell system. 

\begin{table}
   \caption{Average number of handovers and outages. $h_{\rm opt 1}$ refers to the optimal solution of optimization problem~\eqref{eq:phstpo}, $h_{\rm opt 2}$ to \eqref{objfunc}, and $h_{\rm opt 3}$ to \eqref{objfunct}}.\vspace{-12mm}
  \begin{center}
  \begin{tabular}{ | c || c | c | c | }
    \hline
     & $v=5\:$m/s & $v=20\:$m/s & $v=40\:$m/s \\ \hline \hline
    $\bar{H} (h=0 \:{\rm dB})$ & $8.8$ & $12.4$ & $14.2$ \\ \hline 
    $\bar{O} (h=0 \:{\rm dB})$ & $11.4$ & $13.1$ & $14.6$ \\ \hline \hline
    $\bar{H} (h=2 \:{\rm dB})$ & $2.6$ & $3.5$ & $4.8$ \\ \hline
    $\bar{O} (h=2 \:{\rm dB})$ & $11.5$ & $13.6$ & $16.8$ \\ \hline \hline
    $\bar{H} (h=4 \:{\rm dB})$ & $2.2$ & $2.9$ & $3.8$ \\ \hline
    $\bar{O} (h=4 \:{\rm dB})$ & $15.2$ & $16.4$ & $18.5$ \\ \hline \hline
    $\bar{H} (h_{\rm opt 1})$ & $1.1$ & $1.5$ & $2.1$ \\ \hline
    $\bar{O} (h_{\rm opt 1})$ & $9.1$ & $10.2$ & $11.7$ \\ \hline \hline
    $\bar{H} (h_{\rm opt 2})$ & $2.1$ & $2.8$ & $3.5$ \\ \hline
    $\bar{O} (h_{\rm opt 2})$ & $5.3$ & $5.9$ & $6.5$ \\ \hline \hline
    $\bar{H} (h_{\rm opt 3})$ & $1.6$ & $2.4$ & $2.9$ \\ \hline
    $\bar{O} (h_{\rm opt 3})$ & $5.8$ & $6.6$ & $7.1$ \\
    \hline
  \end{tabular}
  \end{center}
  \label{table:average}
  \end{table}

\subsection{Performance in a Multi-cell System}

In this section we evaluate the proposed adaptive handover algorithm in a multi-cell environment consisting of $8$ hexagonal cells. We compare the results to those obtained by the method in~\cite{sant02}, which considers the same system set-up as in our study. Other related work, such as~\cite{akar03,AM_INFO2003}, cannot be used for comparison because the do not consider the wireless channel dynamics, or because they are applied for soft handover, whereas we are more focused on hard handover. 
 
The MT begins its trip close to BS$_0$ and moves towards the remaining cells in a straight line. The parameters of the simulation environment are the same as the ones assumed in the previous two-cells scenario (distance between the base stations, MT speed, etc.)

In order to get an intuition of the multi-cell system operations, we compute the average number of handovers and outages under different MT velocities ($v$) and hysteresis margins ($h$). The average number of outages and handovers is defined as
follows:
\begin{align*}
&\bar{O}=\sum_{n=1}^{N-1}P_O(n)\,, \qquad \bar{H}=\sum_{n=1}^{N-1}P_H(n)\,.
\end{align*}
The simulations results are summarized in Tab.~\ref{table:average}, when $v=5, 20, 40$~m/s and $h=0, 2, 4$~dB. We apply the optimal hysteresis margins resulting from optimization problems \eqref{eq:phstpo}, \eqref{objfunc} and \eqref{objfunct} and compare the results of the optimizations to the constant hysteresis margin proposed in~\cite{sant02}. 
Recall that~\eqref{eq:phstpo} guarantees minimum average handovers, while \eqref{objfunc} guarantees minimum average outages in the network. Eq.~\eqref{objfunct} manages the trade-off and guarantees balanced network operation in terms of both mean number of handovers and outages. From the table, we can conclude that our method substantially outperforms the one proposed in~\cite{sant02}. 

Finally, in the next subsection, we conclude the numerical investigations by applying and studying the accuracy of the proposed approximations. 

\subsection{Approximations Accuracy}

In this subsection we study the accuracy of the approximations presented in Subsections \ref{subsec:ulb1}, \ref{subsec:ulb2} , and \ref{subsec:ulb3}. Recall that the proposed bounds approximate the handover and outage probabilities. Our evaluation includes the execution of several simulations that give a general view of the behavior of the proposed approximations. 

Based on inspection of simulation experiments, we plot the results for few representative cases, namely: 1) with a medium value for $k$ in combination with a low value for $m$, 2) with a medium value for both $k$ and $m$, 3) with a large value for both $k$ and $m$, and 4) with a small value for both $k$ and $m$.

In Fig.~\ref{fig:Approximation1} we plot the handover probability based on the analytical value and the proposed approximations, with $k=6$ and $m=3$. The first method provides the best approximation of the handover probability. On the other hand, the computation complexity of the first method is high. Therefore, test results suggests that it is less expensive to adopt one of the remaining approximations that introduce less computational cost in the system. 

\begin{figure}[t]
  \centering
  \includegraphics[width=0.6 \textwidth]{./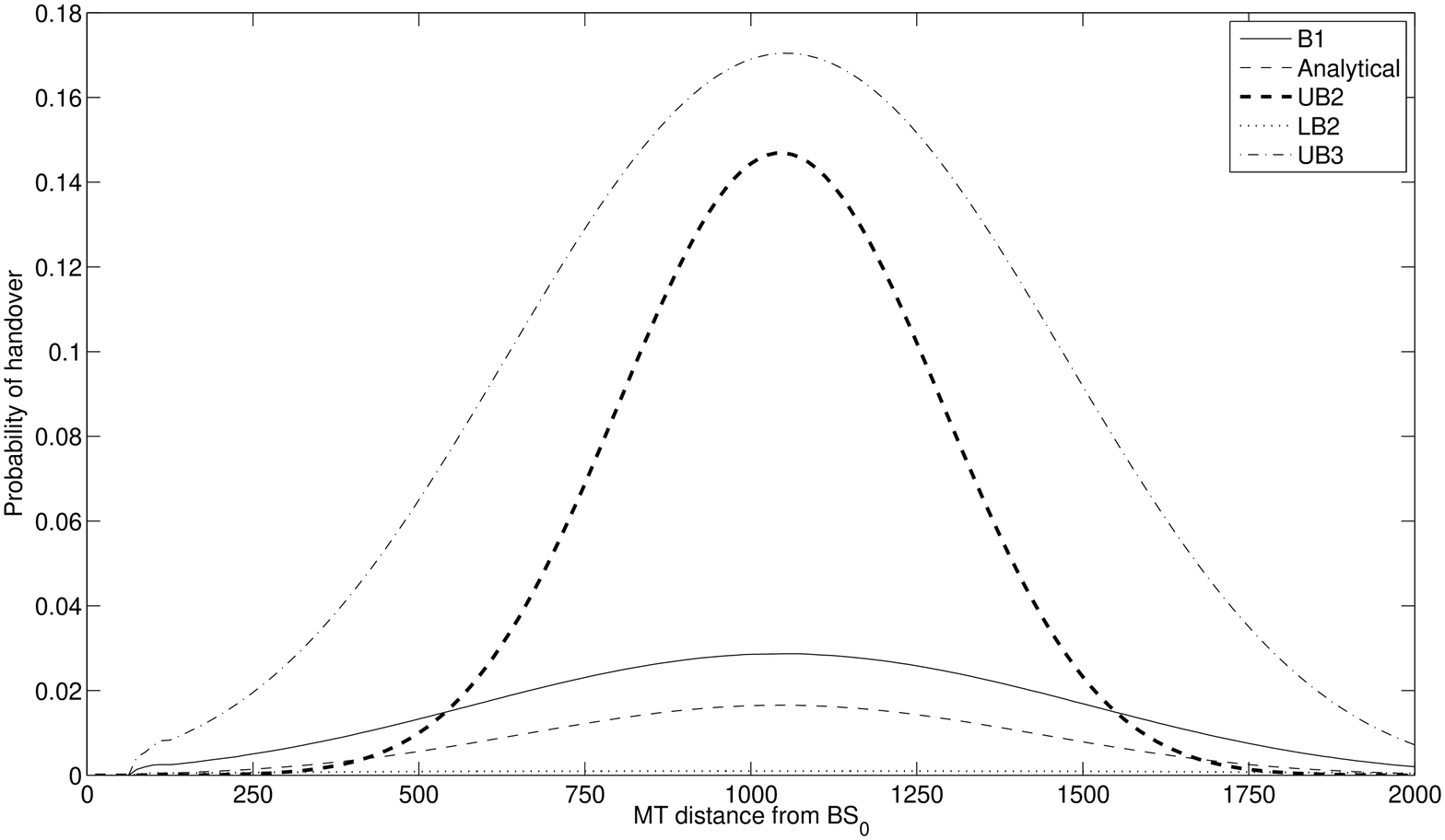}
  \caption{Approximation accuracy ($k=6$, $m=3$). B1 refers to the bound given by Approximation 1 of Subsection\ref{subsec:ulb1}, LB2 and UB2 refers to the lower and upper bounds of Approximation 2 of Subsection \ref{subsec:ulb2}, and UB3 refers to Approximation 3 of Subsection~\ref{subsec:ulb3}}
  \label{fig:Approximation1}
  \end{figure} 
  
\begin{figure}[t]
  \centering
  \includegraphics[width=0.6 \textwidth]{./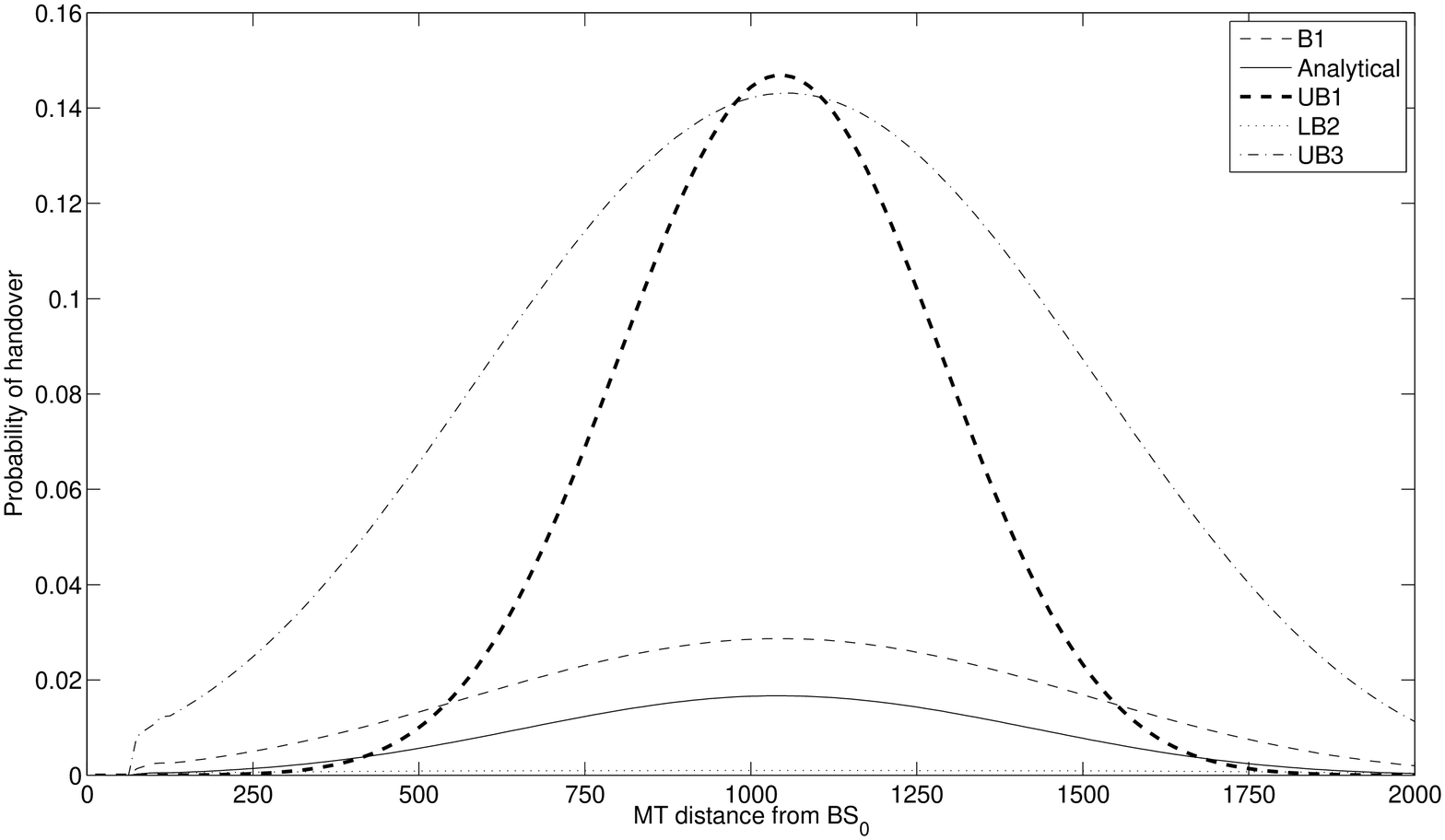}
  \caption{Approximation accuracy ($k=6$, $m=4$). B1 refers to the bound given by Approximation 1 of Subsection\ref{subsec:ulb1}, LB2 and UB2 refers to the lower and upper bounds of Approximation 2 of Subsection \ref{subsec:ulb2}, and UB3 refers to Approximation 3 of Subsection~\ref{subsec:ulb3}}
  \label{fig:Approximation2}
  \end{figure} 

Then we set $m=4$ and we progressively increase $k$ ($k=6$ and $k=8$). In both cases (Figs. \ref{fig:Approximation2} and \ref{fig:Approximation3}) the best approximation is given by the first methodology. An important observation is that the accuracy of the second and the third approximation is getting worse while $k$ is increasing. The first approximation is not affected.
 
  \begin{figure}[t]
  \centering
  \includegraphics[width=0.6 \textwidth]{./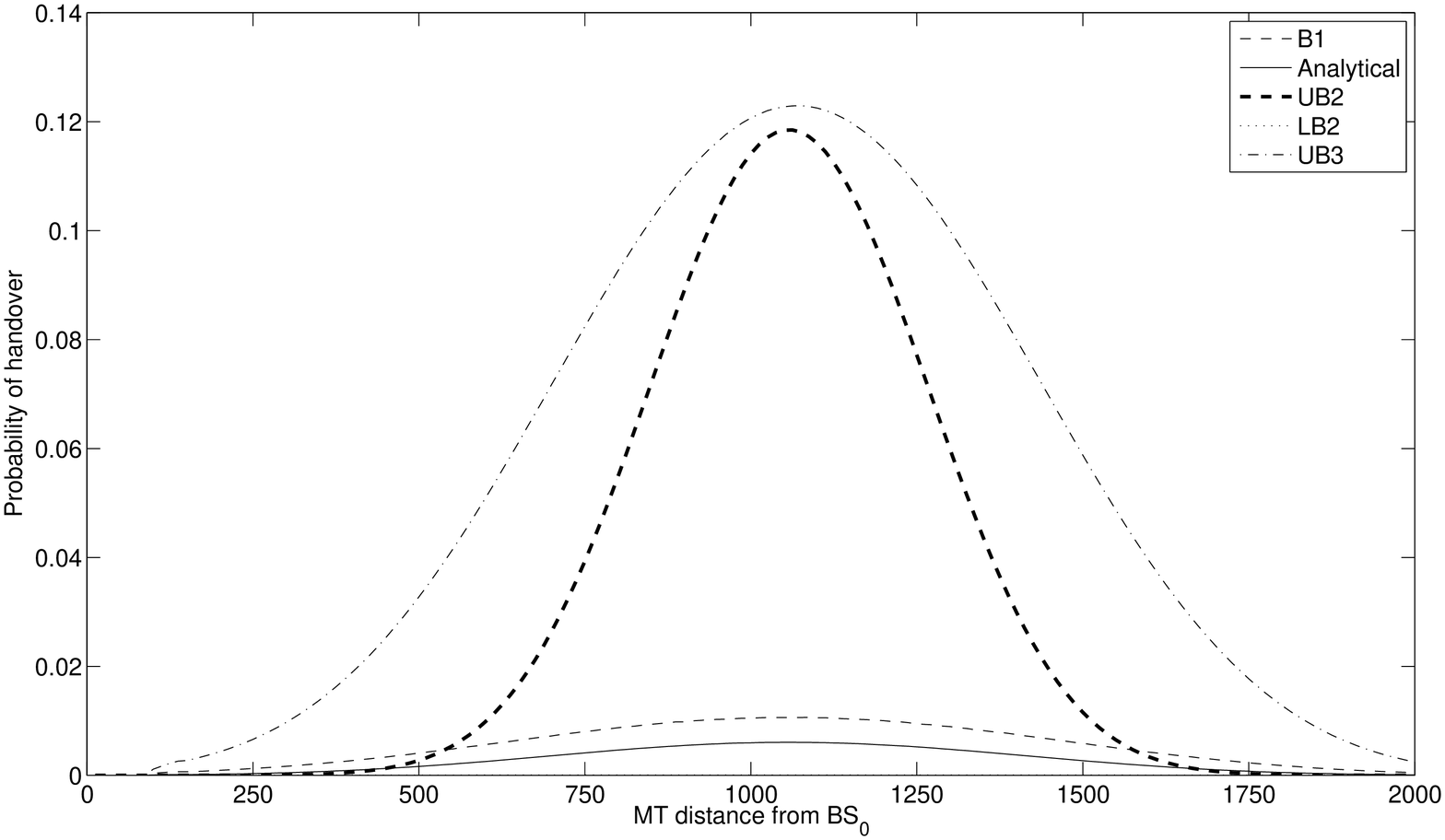}
  \caption{Approximation accuracy ($k=8$, $m=4$). B1 refers to the bound given by Approximation 1 of Subsection\ref{subsec:ulb1}, LB2 and UB2 refers to the lower and upper bounds of Approximation 2 of Subsection \ref{subsec:ulb2}, and UB3 refers to Approximation 3 of Subsection~\ref{subsec:ulb3}}
  \label{fig:Approximation3}
  \end{figure} 

The last scenario includes the adoption of small values for both $k$ and $m$ ($k=4$ and $m=3$). In that case (Fig. \ref{fig:Approximation4}), the first bound acts as a lower bound (not as an upper bound, like in the previous cases). Besides, the second approximation acts as an upper bound when the MT is located close to the cell boundaries and as a lower bound when the MT is located close to the base stations. 

\begin{figure}[t]
  \centering
  \includegraphics[width=0.6 \textwidth]{./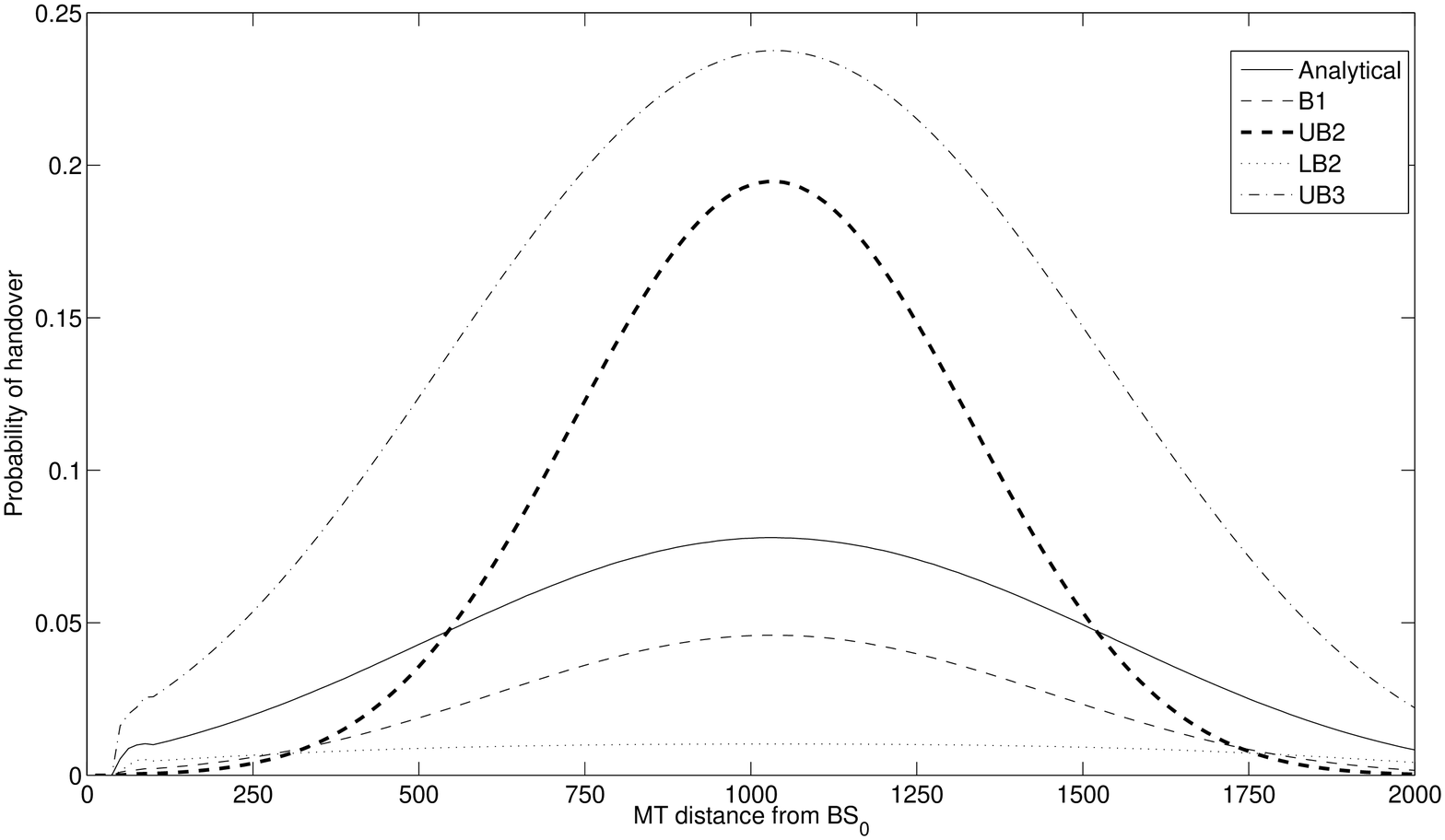}
  \caption{Approximation accuracy ($k=4$, $m=3$). B1 refers to the bound given by Approximation 1 of Subsection\ref{subsec:ulb1}, LB2 and UB2 refers to the lower and upper bounds of Approximation 2 of Subsection \ref{subsec:ulb2}, and UB3 refers to Approximation 3 of Subsection~\ref{subsec:ulb3}}
  \label{fig:Approximation4}
  \end{figure}   

\section{Conclusion} \label{sec:conclusions}

A hybrid system model of handover algorithm was presented. The performance indicators of the handover in terms of outage probability and handover probability were characterized together with approximations and upper and lower bounds of the probabilities. Then, based on such a characterization, some optimization strategies were proposed to optimally take the handover decision. Moreover, a solution algorithm of reduced computational complexity was developed to solve these problems. Monte Carlo simulations illustrated the proposed analysis for the case of two cells systems and multi-cell systems. In particular, it was shown that the analysis is accurate and that the proposed handover optimization outperforms existing methods in the literature.


\bibliographystyle{IEEEtran}
\bibliography{IEEEabrv,megabiblio.bib}{}

\end{document}